\documentclass{elsart}

\usepackage{graphicx}
\usepackage{amsmath}
\usepackage{amssymb}
\usepackage{amsthm}
\usepackage{mathtools}
\usepackage{color}

\usepackage{algorithm}
\usepackage{algpseudocode}

\usepackage{todonotes}
\usepackage{xspace,xargs}
\usepackage{url}
\usepackage{multirow}

\newcommand{\repo}{\url{https://github.com/nicolaprezza/bwt2lcp}\xspace}
\newcommand{\eGap}{\texttt{eGap}\xspace}

\def\mergeBwtLCP{\texttt{merge}\xspace}
\def\induceLCP{\texttt{bwt2lcp}\xspace}

\usepackage{xspace,xargs}
\def\bigS{\mathcal{S}\xspace}

\newtheorem{theorem}{\ \\Theorem}
\newtheorem{lemma}{\ \\Lemma}
\theoremstyle{definition}
\newtheorem{definition}{\ \\Definition}

\begin{document}

\begin{frontmatter}

\thanks[SIR]{ GR is partially, and NP is totally, supported by the project MIUR-SIR CMACBioSeq (``Combinatorial methods for analysis and compression of biological sequences'') grant n.~RBSI146R5L.}

\title{Space-Efficient Construction of Compressed Suffix Trees}

\author{
Nicola Prezza\thanksref{SIR}}

\address{
Department of Computer Science, University of Pisa, Italy
{\tt nicola.prezza@di.unipi.it}}

\author{Giovanna Rosone\thanksref{SIR}}

\address{
Department of Computer Science, University of Pisa, Italy
{\tt giovanna.rosone@unipi.it}}

\begin{abstract}

We show how to build several data structures of central importance to string processing, taking as input the Burrows-Wheeler transform (BWT) and using small extra working space. Let $n$ be the text length and $\sigma$ be the alphabet size. 
We first provide two algorithms that enumerate all LCP values and suffix tree intervals in $O(n\log\sigma)$ time using just $o(n\log\sigma)$ bits of working space on top of the input BWT. 
Using these algorithms as building blocks, for any parameter $0 < \epsilon \leq 1$ we show how to build the PLCP bitvector and the balanced parentheses representation of the suffix tree topology in $O\left(n(\log\sigma + \epsilon^{-1}\cdot \log\log n)\right)$ time using at most $n\log\sigma \cdot(\epsilon + o(1))$ bits of working space on top of the input BWT and the output. 
%In particular, our procedures can be used to build a compressed suffix tree (CST) representation in $O(\epsilon^{-1}\cdot n\log n)$ time using just $n\log\sigma  \cdot(\epsilon + o(1))$ bits of working space on top of the input \emph{text} and the output CST; for any $\epsilon \in o(1)$, this space is just $o(n\log\sigma)$.
%We also provide an algorithm that merges the BWTs of two string collections of total length $n$ in $O(n\log\sigma)$ time and $o(n\log\sigma)$ bits of working space. 
In particular, this implies that we can build a compressed suffix tree from the BWT using just succinct working space (i.e. $o(n\log\sigma)$ bits) and any time in $\Theta(n\log\sigma) + \omega(n\log\log n)$.
This improves the previous most space-efficient algorithms, which worked in $O(n)$ bits and $O(n\log n)$ time. 
We also consider the problem of merging BWTs of string collections, and provide a solution running in $O(n\log\sigma)$ time and using just $o(n\log\sigma)$ bits of working space. 
An efficient implementation of our LCP construction
and BWT merge algorithms 
use (in RAM) as few as $n$ bits on top of a packed representation of the input/output and process data as fast as $2.92$ megabases per second.
\begin{keyword}
Burrows-Wheeler transform, compressed suffix tree, LCP, PLCP.
\end{keyword}

\end{abstract}

\end{frontmatter}

\section{Introduction and Related Work}
The increasingly-growing production of large string collections---especially in domains such as biology, where new generation sequencing technologies can nowadays generate Gigabytes of data in few hours---is lately generating much interest towards fast and space-efficient algorithms able to index this data. The Burrows-Wheeler Transform~\cite{burrows1994block} and its extension to sets of strings~\cite{MantaciRRS07,BauerCoxRosoneTCS2013} is becoming the gold-standard in the field: even when not compressed, its size is one order of magnitude smaller than classic suffix arrays (while preserving many of their indexing capabilities). 
This generated considerable interest towards fast and space-efficient BWT construction algorithms~\cite{BauerCoxRosoneTCS2013,Karkkainen:2007:FBS:1314704.1314852,10.1007/978-3-319-15579-1_46,10.1007/978-3-319-02432-5_5,8712716,Kempa:2019:SSS:3313276.3316368,navarro2014optimal,10.1007/978-3-319-15579-1_46}. 
As a result, the problem of building the BWT is well understood to date. The fastest algorithm solving this problem operates in sublinear $O(n/\sqrt{\log n})$ time and $O(n)$ bits of space on a binary text of length $n$ by exploiting word parallelism~\cite{Kempa:2019:SSS:3313276.3316368}. The authors also provide a conditional lower bound suggesting that this running time might be optimal. The most space-efficient algorithm terminates in $O(n\log n/\log\log n)$ time and uses just $o(n\log\sigma)$ bits of space (succinct) on top of the input and output~\cite{navarro2014optimal}, where $\sigma$ is the alphabet's size. In the average case, this running time can be improved to $O(n)$ on constant-sized alphabets while still operating within succinct space~\cite{10.1007/978-3-319-15579-1_46}.

In some cases, a BWT alone is not sufficient to complete efficiently particular string-processing tasks. 
For this reason, the functionalities of the BWT are often extended by flanking to it additional structures such as the Longest Common Prefix (LCP) array~\cite{CGRS_JDA_2016} (see e.g.~\cite{prezza2018detecting,prezza2019,GuerriniRosoneAlcob2019,TUSTUMI2016} for bioinformatic applications requiring this additional component). 
%For example, %Prezza et al. in~\cite{prezza2018detecting,prezza2019} consider an application (SNP discovery) requiring the LCP array in addition to the BWT, and .... MORE EXAMPLES
%several applications require the LCP array in addition to the BWT, for instance in for the metagenomic classification.... MORE EXAMPLES....
A disadvantage of the LCP array is that it requires $O(n\log n)$ bits to be stored in plain form. To alleviate this problem, usually the PLCP array~\cite{Sadakane:2002:SRL:545381.545410}---an easier-to-compress permutation of the LCP array---is preferred. The PLCP relies on the idea of storing LCP values in text order instead of suffix array order. As shown by Kasai et al.~\cite{10.1007/3-540-48194-X_17}, this permutation is almost increasing ($PLCP[i+1] \geq PLCP[i]-1$) and can thus be represented in just $2n$ bits in a bitvector known as the \emph{PLCP bitvector}. More advanced applications might even require full suffix tree functionality. In such cases, compressed suffix trees~\cite{Grossi:2005:CSA:1093654.1096192,Sadakane:2007:CST:1326296.1326297} (CSTs) are the preferred choice when the space is at a premium. A typical compressed suffix tree is formed by a compressed suffix array (CSA), the PLCP bitvector, and a succinct representation of the suffix tree topology~\cite{Sadakane:2007:CST:1326296.1326297} (there exist other designs, see Ohlebusch et al.~\cite{10.1007/978-3-642-16321-0_34} for an exhaustive survey).
To date, several practical algorithms have been developed to solve the task of building \emph{de novo} such additional components~\cite{CGRS_JDA_2016,holt2014constructing,holt2014merging,Bonizzoni2018,EgidiAMB2019,Belazzougui:2014:LTC:2591796.2591885,10.1007/978-3-642-02441-2_17,Valimaki:2010:ECS:1498698.1594228}, but little work has been devoted to the task of computing them from the BWT in little working space (internal and external). 
Considering the advanced point reached by state-of-the-art BWT construction algorithms, it is worth to explore whether such structures can be built more efficiently starting from the BWT, rather than from the raw input text.

\textbf{CSA} As far as the CSA is concerned, this component can be easily built from the BWT using small space as  it is formed (in its simplest design) by just a BWT with rank/select functionality enhanced with a suffix array sampling, see also~\cite{Belazzougui:2014:LTC:2591796.2591885}.

\textbf{LCP} We are aware of only one work building the LCP array in small space from the BWT: Beller et al.~\cite{beller2013computing} show how to build the LCP array in $O(n\log\sigma)$ time and $O(n)$ bits of working space on top of the input BWT and the output. 
Other works~\cite{munro2017space,Belazzougui:2014:LTC:2591796.2591885} show how to build the LCP array directly from the text in $O(n)$ time and $O(n\log\sigma)$ bits of space (compact).

\textbf{PLCP} K{\"a}rkk{\"a}inen et al.~\cite{10.1007/978-3-642-02441-2_17} show that the PLCP bitvector can be built in $O(n\log n)$ time using $n$ bits of working space on top of the text, the suffix array, and the output PLCP. Kasai at al.'s lemma also stands at the basis of a more space-efficient algorithm from V\"{a}lim\"{a}ki et al.~\cite{Valimaki:2010:ECS:1498698.1594228}, which computes the PLCP from a CSA in $O(n\log n)$ time using constant working space on top of the CSA and the output. 
Belazzougui~\cite{Belazzougui:2014:LTC:2591796.2591885} recently presented an algorithm for building the PLCP bitvector from the text in optimal $O(n)$ time and compact space ($O(n\log\sigma)$ bits). 

\textbf{Suffix tree topology} The remaining component required to build a compressed suffix tree (in the version described by Sadakane~\cite{Sadakane:2007:CST:1326296.1326297}) is the suffix tree topology, represented either in BPS~\cite{Munro:1997:SRB:795663.796328} (balanced parentheses) or DFUDS~\cite{Benoit2005} (depth first unary degree sequence), using $4n$ bits.
As far as the BPS representation is concerned, Hon et al.~\cite{Hon:2009:BTB:1654348.1654351} show how to build it from a CSA in $O(n(\log\sigma + \log^\epsilon n))$ time and compact space for any constant $\epsilon>0$. Belazzougui~\cite{Belazzougui:2014:LTC:2591796.2591885} improves this running time to the optimal $O(n)$, still working within compact space. 
V\"{a}lim\"{a}ki et al.~\cite{Valimaki:2010:ECS:1498698.1594228} describe a linear-time algorithm that improves the space to $O(n)$ bits on top of the LCP array (which however needs to be represented in plain form), while Ohlebusch et al.~\cite{10.1007/978-3-642-16321-0_34} show how to build the DFUDS representation of the suffix tree topology in $O(t_{lcp}\cdot n)$ 
time using $n+o(n)$ bits of working space on top of a structure supporting access to LCP array values in $O(t_{lcp})$ time.

Summing up, the situation for building compressed suffix trees from the BWT is the following: algorithms working in optimal linear time require $O(n\log\sigma)$ bits of working space. 
Algorithms reducing this space to $O(n)$ (on top of a CSA) are only able to build the suffix tree topology within $O(n\cdot t_{lcp})$ time, which is $\Omega(n\log^{\epsilon}n)$ with the current best techniques, and the PLCP bitvector in $O(n\log n)$ time. No algorithm can build all the three CST components within $o(n\log\sigma)$ bits of working space on top of the input BWT and the output. Combining the most space-efficient existing algorithms, the following two trade-offs can therefore be achieved for building all compressed suffix tree components from the BWT: 
\begin{itemize}
	\item $O(n\log\sigma)$ bits of working space and $O(n)$ time, or
	\item $O(n)$ bits of working space and $O(n\log n)$ time.
\end{itemize}

\paragraph{Our contributions}

In this paper, we give new space-time trade-offs that allow building the CST's components in smaller working space (and in some cases even faster) with respect to the existing solutions. We start by combining Beller et al.'s algorithm~\cite{beller2013computing} with the suffix-tree enumeration procedure of Belazzougui~\cite{Belazzougui:2014:LTC:2591796.2591885} to obtain an algorithm that enumerates (i) all pairs $(i,LCP[i])$, and (ii) all suffix tree intervals in $O(n\log\sigma)$ time using just $o(n\log\sigma)$ bits of working space on top of the input BWT. We use this procedure to obtain algorithms that build (working space is on top of the input BWT and the output):

\begin{enumerate}
    \item\label{goalLCP} The LCP array of a string collection in $O(n\log\sigma)$ time and $o(n\log\sigma)$ bits of working space (see Section \ref{sec:LCP}).
    \item\label{goalPLCP} the PLCP bitvector and the BPS representation of the suffix tree topology in $O\left(n(\log\sigma + \epsilon^{-1}\cdot \log\log n)\right)$ time and $n\log\sigma \cdot (\epsilon + o(1))$ bits of working space, for any user-defined parameter $0 < \epsilon \leq 1$ (see Section \ref{sec:PLCP} and \ref{sec:ST topology}).
    \item\label{goalMerge} The BWT of the union of two string collections of total size $n$ in $O(n\log\sigma)$ time and $o(n\log\sigma)$ bits of working space, given the BWTs of the two collections as input (see Section \ref{sec:algo2}). 
\end{enumerate}

%Note: in contributions (1,3) the running time does not depend on the parameter $\epsilon$, so we can take $\epsilon=0$ and the working space simplifies to $o(n\log\sigma)$ bits. The same space can be achieved in 
Contribution (\ref{goalLCP}) is the first showing that the LCP array can be induced from the BWT using succinct working space \emph{for any alphabet size}. 

Contribution (\ref{goalPLCP}) can be used to build a compressed suffix tree from the BWT using just $o(n\log\sigma)$ bits of working space and any time in $O(n\log\sigma) + \omega(n\log\log n)$---for example, $O(n(\log\sigma + (\log\log n)^{1+\delta}))$, for any $\delta>0$. On small alphabets, this improves both working space and running time of existing $O(n)$-bits solutions.

Also contribution (\ref{goalMerge})
improves the state-of-the-art, due to Belazzougui et al.~\cite{Belazzougui:2014:LTC:2591796.2591885,belazzougui2016linear}. In those papers, the authors  show how to merge the BWTs of two texts $T_1, T_2$ and obtain the BWT of the collection $\{T_1, T_2\}$ in $O(nk)$ time and $n\log\sigma(1+1/k) + 11n + o(n)$ bits of working space for any $k \geq 1$~\cite[Thm. 7]{belazzougui2016linear}.
When $k=\log\sigma$, this running time is the same as our result (\ref{goalMerge}), but the working space is much higher  on small alphabets.

We implemented and tested our algorithms (\ref{goalLCP}, \ref{goalMerge}) on DNA alphabet. Our tools use (in RAM) as few as $n$ bits on top of a packed representation of the input/output, and process data as fast as $2.92$ megabases per second.

%This paper is an extended version of a previous one~\cite{prezza_et_al:LIPIcs:2019:10478} that appeared in the proceedings of the \emph{30th symposium on Combinatorial Pattern Matching}. 
%Contributions (\ref{goalLCP},\ref{goalMerge}) are part of that preliminary version, while the algorithms for building the PLCP bitvector and the BPS representation of the suffix tree topology are new. 

Contributions (\ref{goalLCP}, \ref{goalMerge}) are part of a preliminary version~\cite{prezza_et_al:LIPIcs:2019:10478} of this paper.
This paper also extends such results 
with the suffix tree interval enumeration procedure and 
with the algorithms of contribution (\ref{goalPLCP}) for building the PLCP bitvector and the BPS representation of the suffix tree topology.

\section{Basic Concepts}\label{sec:notation}

Let $\Sigma =\{c_1, c_2, \ldots, c_\sigma\}$ be a finite ordered alphabet of size $\sigma$ with $\# = c_1< c_2< \ldots < c_\sigma$, where $<$ denotes the standard lexicographic order.
Given a text $T=t_1 t_2 \cdots t_n \in \Sigma^*$ we denote by $|T|$ its length $n$. 
We assume that the input text is terminated by the special symbol (terminator) $\#$, which does not appear elsewhere in $T$. 
We use $\epsilon$ to denote the empty string.
A \emph{factor} (or \emph{substring}) of $T$ is written as $T[i,j] = t_i \cdots t_j$ with $1\leq i \leq j \leq n$.  When declaring an array $A$, we use the same notation $A[1,n]$ to indicate that the array has $n$ entries indexed from $1$ to $n$.
A \emph{right-maximal} substring $W$ of $T$ is a string for which there exist at least two distinct characters $a,b$ such that $Wa$ and $Wb$ occur in $T$.

The \emph{suffix array} SA of a string $T$ (see \cite{PuglisiTurpin2008} for a survey) is an array containing the permutation of the integers $1,2, \ldots, n$ that arranges the starting positions of the suffixes of $T$ into lexicographical order, i.e., for all $1 \leq i < j \leq n$, $SA[i] < SA[j]$.

The \emph{inverse suffix array} $ISA[1, n]$ is the inverse permutation of $SA$, i.e., $ISA[i] = j$ if and only if $SA[j] = i$. 
%The inverse suffix array constitutes a lexicographic naming of the suffixes allowing constant time comparisons: $S_i \leq S_j$ if and only if $ISA[i] \leq ISA[j]$. 

The Burrows-Wheeler Transform of a string $T$ is a reversible transformation  that permutates its symbols, i.e. $BWT[i]=T[SA[i]-1]$ if $SA[i] > 1$ or $\#$ otherwise.

In some of our results we deal with \emph{string collections}. 
There exist some natural extensions of the suffix array and the Burrows-Wheeler Transform to a collection of strings.

Let $\bigS = \{T_1, \dots, T_m\}$ be a string collection of total length $n$, where each $T_i$ is terminated by a character $\#$ (the terminator) lexicographically smaller than all other alphabet's characters. In particular, a collection is an ordered multiset, and we denote $\bigS[i] = T_i$.

We define lexicographic order among the strings' suffixes in the usual way, except that, \emph{only while sorting}, each terminator $\#$ of the $i$-th string $\bigS[i]$ is considered (implicitly) a different symbol $\#_i$, with $\#_i < \#_j$ if and only if $i<j$.
%This makes every suffix of every string is unique in the collection. 
Equivalently, in case of equal suffixes ties are broken by input's order: if $T_i[k,|T_i|-1]=T_j[k',|T_j|-1]$, then we define $T_i[k,|T_i|] < T_j[k',|T_j|]$ if and only if $i < j$.
%%% The \emph{working space} of an algorithm is the total space used during computation in addition to the input and the output. We moreover assume that input and output are re-writable.

The \emph{generalized suffix array} $GSA[1,n]$ (see~\cite{Shi:1996,CGRS_JDA_2016,Louza2017}) of  $\bigS$ is an array of pairs $GSA[i] = \langle j,k \rangle$ such that 
$\bigS[j][k,|\bigS[j]|]$
is the $i$-th lexicographically smallest suffix of strings in $\bigS$, where we break ties by input position (i.e. $j$ in the notation above).
Note that, if the collection is formed by a single string $T$, then the first component in $GSA$'s pairs is always equal to 1, and the second components form the suffix array of $T$.
%\todo{Chiami range quello che in seguito si chiama SA interval, giusto? Si potrebbe chiamare qui in questo modo}
We denote by $\mathtt{range(W)} = \langle \mathtt{left(W)}, \mathtt{right(W)} \rangle$, also referred to as \emph{suffix array (SA) interval of $W$, or simply $W$-interval}, the maximal pair $\langle L,R \rangle$ such that all suffixes in $GSA[L,R]$ are prefixed by $W$. We use the same notation with the suffix array of a single string $T$. 
Note that the number of suffixes lexicographically smaller than $W$ in the collection is $L-1$. 
We extend this definition also to cases where $W$ is not present in the collection: in this case, the (empty) range is $\langle L, L-1\rangle$ and we still require that $L-1$ is the number of suffixes lexicographically smaller than $W$ in the collection (or in the string). 

The \emph{extended Burrows-Wheeler Transform}
$BWT[1,n]$ \cite{MantaciRRS07,BauerCoxRosoneTCS2013} of $\bigS$ is the character array defined as $BWT[i] = \bigS[j][k-1\ \mathtt{mod}\ |\bigS[j]|]$, where $\langle j,k \rangle =  GSA[i]$. 

To simplify notation, we indicate with ``$BWT$'' both the Burrows-Wheeler Transform of a string and of a string collection. The used transform will be clear from the context.
%\todo{che ne pensi di scrivere questa frase?}

The \emph{longest common prefix} (LCP) array of a string $s$ \cite{ManberMyers1993} (resp. a collection $\bigS$ of strings, see \cite{CGRS_JDA_2016,Louza2017,EgidiAMB2019}) is an array storing the length of the longest common prefixes between two consecutive suffixes of $s$ (resp. $\bigS$) in lexicographic order (with $LCP[1]=0$). When applied to a string collection, we take the longest common prefix of two equal suffixes of length $\ell$ to be equal to $\ell-1$ (i.e. as if their terminators were different).

Given two collections $\bigS_1, \bigS_2$ of total length $n$, the Document Array of their union is the binary array $DA[1,n]$ such that $DA[i] = 0$ if and only if the $i$-th smallest suffix comes from $\bigS_1$. When merging suffixes of the two collections, ties are broken by collection number (i.e. suffixes of $\bigS_1$ are smaller than suffixes of $\bigS_2$ in case of ties).

The $C$-array of a string (or collection) $S$ is an array $C[1,\sigma]$ such that $C[i]$ contains the number of characters lexicographically smaller than $i$ in $S$, plus one ($S$ will be clear from the context). 
Equivalently, $C[c]$ is the starting position of suffixes starting with $c$ in the suffix array of the string. 
When $S$ (or any of its permutations) is represented with a balanced wavelet tree, then we do not need to store explicitly $C$, and $C[c]$ can be computed in $O(\log\sigma)$ time with no space overhead on top of the wavelet tree (see~\cite{navarro2012wavelet}). 
%In the rest of the paper, we assume $C$ is accessed in this way. 
Function $\mathtt{S.rank_c(i)}$ returns the number of characters equal to $c$ in $S[1,i-1]$. When $S$ is represented by a wavelet tree, \emph{rank} can be computed in $O(\log \sigma)$ time.

Function $\mathtt{getIntervals(L,R,BWT)}$, where $BWT$ is the extended Burrows-Wheeler transform of a string collection $\bigS$ and $\langle L,R\rangle$ is the 
suffix array interval 
of some string
$W$ appearing as a substring of some element of $\bigS$, returns all suffix array intervals of strings $cW$, with $c\neq \#$, that occur in $\bigS$. 
When $BWT$ is represented with a balanced wavelet tree, we can implement this function so that it terminates in $O(\log\sigma)$ time per returned interval~\cite{beller2013computing}. The function can be made to return the output intervals on-the-fly, one by one (in an arbitrary order), without the need to store them all in an auxiliary vector, with just $O(\log n)$ bits of additional overhead in space~\cite{beller2013computing} (this requires to DFS-visit the sub-tree of the wavelet tree induced by $BWT[L,R]$; the visit requires only $\log\sigma$ bits to store the current path in the tree).

An extension of the above function that navigates in parallel two BWTs is immediate. Function $\mathtt{getIntervals(L_1,R_1,L_2, R_2, BWT_1, BWT_2)}$ takes as input two ranges of a string $W$ on the BWTs of two collections, and returns the pairs of ranges on the two BWTs corresponding to all left-extensions $cW$ of $W$ ($c\neq \#$) such that $cW$ appears in at least one of the two collections. To implement this function, it is sufficient to navigate in parallel the two wavelet trees as long as at least one of the two intervals is not empty.

Let $S$ be a string. 
The function $S.\mathtt{rangeDistinct(i,j)}$ returns the set of distinct alphabet characters \emph{different than the terminator} $\#$ in $S[i,j]$. Also this function can be implemented in $O(\log\sigma)$ time per returned element when $S$ is represented with a wavelet tree (again, this requires a DFS-visit of the sub-tree of the wavelet tree induced by $S[i,j]$).

$BWT.\mathtt{bwsearch(\langle L,R \rangle, c)}$ is the function that, given the suffix array interval $\langle L,R \rangle$ of a string $W$ occurring in the collection, returns the suffix array interval of $cW$ by using the BWT of the collection~\cite{ferragina2000opportunistic}. This function requires access to array $C$ and \emph{rank} support on $BWT$, and runs in $O(\log\sigma)$ time when $BWT$ is represented with a balanced wavelet tree.

To conclude, our algorithms will take as input a wavelet tree representing the BWT. As shown in the next lemma by Claude et al., this is not a restriction:

\begin{lemma}[\cite{claude2015wavelet}]\label{thm:BWT->WT}
	Given a word-packed string of length $n$ on alphabet $[1,\sigma]$, we can replace it with its wavelet matrix~\cite{claude2015wavelet} in $O(n\log\sigma)$ time using $n$ bits of additional working space.
\end{lemma}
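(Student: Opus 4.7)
The plan is to build the wavelet matrix one level at a time, following the strategy of Claude et al.~\cite{claude2015wavelet}: the packed input and the output wavelet matrix occupy the same memory region, and the $n$ extra bits serve as a scratch buffer. Write $L=\lceil\log\sigma\rceil$ and let $B_\ell$ (a bitvector of length $n$) and $R_\ell$ (a sequence of length $n$) denote the $\ell$-th wavelet-matrix level and the sequence seen at that level, so that $R_0$ is the input, $B_\ell[i]$ is the $\ell$-th most significant bit of $R_\ell[i]$, and $R_{\ell+1}$ is the stable partition of $R_\ell$ induced by $B_\ell$.

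The invariant I would maintain is that at the start of level $\ell$, the first $\ell n$ bits of memory hold $B_0,\ldots,B_{\ell-1}$ contiguously, while the remaining $n(L-\ell)$ bits hold $R_\ell$ packed at $L-\ell$ bits per symbol (the $\ell$ top bits of each symbol, already consumed, are dropped). This uses exactly $nL$ bits at every step. The transition from level $\ell$ to level $\ell+1$ is: (i) extract $B_\ell$ from the top bit of every symbol of $R_\ell$ and count its zeros $n_0$; (ii) stably partition the $(L-\ell-1)$-bit tails into the destination halves $[(\ell+1)n,\,(\ell+1)n+n_0(L-\ell-1))$ (zeros) and $[(\ell+1)n+n_0(L-\ell-1),\,nL)$ (ones). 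The space budget is preserved automatically: shrinking the symbol width from $L-\ell$ to $L-\ell-1$ frees exactly $n$ bits, precisely the length of the new $B_\ell$.

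The scratch buffer is what makes the level step execute in place: I would first compute $B_\ell$ entirely into scratch in one pass over $R_\ell$, then rewrite $R_\ell$ into $R_{\ell+1}$ symbol by symbol, routing tails to the next available slot in either the zero-half or the one-half. Because source and destination overlap, a handful of source bits may be overwritten before they can be read; these are first copied into the scratch buffer, whose size exactly matches the offset between the two regions. Word-packing lets us process $\Theta(\log n/(L-\ell))$ symbols per word operation, so level $\ell$ costs $O(n(L-\ell)/\log n)$ word operations, for a total of $\sum_{\ell=0}^{L-1} O(n(L-\ell)/\log n) = O(nL^2/\log n) = O(n\log\sigma)$, using $\log\sigma\le\log n$. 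Finally, $o(n)$-bit \rank/\sel support~\cite{navarro2012wavelet} is added to each of the $L$ bitvectors in $O(n)$ time and $o(n)$ auxiliary bits per level, contributing $O(n\log\sigma)$ time and $o(n\log\sigma)$ extra bits absorbed by the wavelet-matrix output.

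The main obstacle is the correctness of the in-place level step: $R_\ell$ and $R_{\ell+1}$ overlap in the same memory, and the stable partition can push bits arbitrarily far from their original positions. The saving grace is the tight space accounting: the symbol-width decrease frees exactly as many bits as $B_\ell$ consumes, and the $n$-bit scratch buffer is exactly enough to stage those source bits that would otherwise be clobbered by the overlap. Verifying collision-freeness in detail is the delicate piece of bookkeeping; once that is done, iteration over levels, word-parallel analysis, and the rank/select augmentation are all routine.
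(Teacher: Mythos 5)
First, a point of reference: the paper does not prove this lemma at all---it is imported verbatim from Claude et al.~\cite{claude2015wavelet} as a black box, so there is no in-paper proof to compare against. Your reconstruction of the construction is therefore judged on its own merits. The global space accounting (drop one bit per symbol per level, so that the $n$ bits freed at level $\ell$ exactly pay for $B_\ell$, keeping the footprint at $nL$ bits throughout) is the right idea and matches the spirit of the known in-place constructions. The time analysis and the rank/select augmentation are also fine.

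However, the step you yourself flag as ``the delicate piece of bookkeeping'' is where the argument actually breaks, and the claim you make about it is false. Two problems. First, the scratch buffer is double-booked: you compute $B_\ell$ into it and then also propose to use it to stage clobbered source bits during the very same level step. Second, and more seriously, the amount of unread source data that the partition's write pointers can destroy is not ``a handful of bits'' bounded by the $n$-bit offset between source and destination. The ones-half write pointer starts at offset $n_0(L-\ell-1)$ into the destination region; if, say, half the symbols are $1$s and the first symbols read are $1$s, the very first writes land on source symbols near the middle of $R_\ell$ that will not be read for another $\Theta(n)$ steps, and the clobbered-but-unread region grows to $\Theta(n(L-\ell))$ bits---far beyond any $O(n)$-bit buffer. (Even the zeros-half pointer overshoots the read pointer by up to $n-t$ bits after $t$ symbols, which already consumes the entire scratch while $B_\ell$ still lives there.) So a single left-to-right pass writing symbols directly to their final positions cannot be repaired by an $n$-bit staging buffer. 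A correct route is to avoid the overlapping stable partition altogether: compute $B_\ell$ into scratch, add $o(n)$-bit rank support to it so that the target position $\pi(i)$ of each symbol is computable in $O(1)$ time, and then apply the permutation $\pi$ in place by cycle-following at the \emph{unchanged} width $L-\ell$, using the now-redundant $\ell$-th bit inside each stored symbol as the ``visited'' marker; only afterwards compact the symbols from width $L-\ell$ to $L-\ell-1$ (a non-overlapping left-to-right rewrite) and drop $B_\ell$ into the freed $n$ bits. This preserves your space and time bounds and closes the gap; alternatively, one simply defers to the construction given in~\cite{claude2015wavelet}, as the paper does.
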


Wavelet matrices~\cite{claude2015wavelet} are a particular space-efficient representation of wavelet trees taking $n\log\sigma \cdot (1+o(1))$ bits of space and supporting all their operations within the same running times. Since the output of all our algorithms will take at least $n$ bits, it will always be possible to re-use a portion of the output's space (before computing it) to fit the extra $n$ bits required by Lemma \ref{thm:BWT->WT}.

%The main contribution of this paper is to show how to enumerate suffix tree intervals and LCP values using small space on top of the input BWT and the output. As by-products, we show that our algorithm can be used to build the PLCP bitvector and the balanced parentheses representation of the suffix tree topology without increasing the working space. 
%Our work builds upon the following two results from Belazzougui\footnote{While the original theorem~\cite[Sec. 5.1]{Belazzougui:2014:LTC:2591796.2591885} is general and uses the underlying rank data structure as a black box, in our case we strive for succinct space (not compact as in~\cite{Belazzougui:2014:LTC:2591796.2591885}) and stick to wavelet trees. All details on how to achieve the claimed running time and space are described in Section \ref{sec:belazzougui}.}\cite{Belazzougui:2014:LTC:2591796.2591885} and Beller et al.~\cite{beller2013computing}. 

\section{Belazzougui's Enumeration Algorithm}\label{sec:belazzougui}

In~\cite{Belazzougui:2014:LTC:2591796.2591885}, Belazzougui showed that a BWT with \emph{rank} and \emph{range distinct} functionality (see Section \ref{sec:notation}) is sufficient to enumerate in small space a rich representation of the internal nodes of the suffix tree of a text $T$. For the purposes of this article, we assume that the BWT is represented using a wavelet tree (whereas Belazzougui's original result is more general), and thus that all queries take $O(\log \sigma)$ time.

%\begin{theorem}[Belazzougui \cite{Belazzougui:2014:LTC:2591796.2591885}]\label{th:Belazzougui}
%	Given the Burrows-Wheeler Transform of a text $T\in[1,\sigma]^n$ represented with a wavelet tree, we can solve the following problem in $O(n\log\sigma)$ time using $O(\sigma^2\log^2n)$ bits of working space on top of the BWT. Enumerate the following information for each distinct right-maximal substring $W$ of $T$: (i) $|W|$, and (ii) $range(Wc_i)$ for all $c_1 < \dots < c_k$ such that $Wc_i$ occurs in $T$.
%\end{theorem}

\begin{theorem}[Belazzougui \cite{Belazzougui:2014:LTC:2591796.2591885}]\label{th:Belazzougui}
	Given the Burrows-Wheeler Transform of a text $T\in[1,\sigma]^n$ represented with a wavelet tree, we can enumerate the following information for each distinct right-maximal substring $W$ of $T$: (i) $|W|$, and (ii) $range(Wc_i)$ for all $c_1 < \dots < c_k$ such that $Wc_i$ occurs in $T$. The process runs in $O(n\log\sigma)$ time and uses $O(\sigma^2\log^2n)$ bits of working space on top of the BWT. 
\end{theorem}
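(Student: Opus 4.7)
The plan is to perform a depth-first enumeration of the internal nodes of the suffix tree of $T$ (equivalently, the distinct right-maximal substrings) via a Weiner-link traversal, using only backward search and $\mathtt{rangeDistinct}$ on the wavelet tree of the BWT. For each visited right-maximal $W$, I would maintain a \emph{packet} consisting of $|W|$, $\mathit{range}(W)$, and the list of ranges $\mathit{range}(Wc_1), \dots, \mathit{range}(Wc_k)$ of all right-extensions of $W$ occurring in $T$.

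Initialization enqueues the packet for the root $W = \varepsilon$: length $0$, range $\langle 1, n\rangle$, and children ranges $\langle C[c], C[c+1]-1\rangle$ read from the $C$-array. The main loop pops the packet for a right-maximal $W$, outputs the required information, and extends $W$ by one character to the left via Weiner links: invoking $\mathtt{rangeDistinct}$ on $BWT[\mathit{range}(W)]$ lists the characters $c$ such that $cW$ occurs in $T$, and for each such $c$ I apply $\mathtt{bwsearch}$ with symbol $c$ to $\mathit{range}(W)$ and to each $\mathit{range}(Wc_i)$, obtaining $\mathit{range}(cW)$ together with the candidate children $\mathit{range}(cWc_i)$. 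Keeping only the non-empty children, I conclude that $cW$ is right-maximal iff at least two survive, and in that case push its completed packet onto the stack.

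Correctness follows from two observations: (a) if $cW$ is right-maximal then so is the string obtained by deleting its leading character (deletion never decreases the number of right-extensions), so each visited right-maximal $cW$ is reached from a previously visited right-maximal $W$; and (b) every right-maximal substring has a unique such parent, so each internal suffix tree node is enqueued exactly once. For the time bound, an amortized analysis charges each $\mathtt{rangeDistinct}$ and $\mathtt{bwsearch}$ call to either a suffix-tree edge or a suffix-link-tree edge, both $O(n)$ in number; each call costs $O(\log \sigma)$ on the wavelet tree, yielding $O(n\log\sigma)$ overall time.

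The main obstacle is the working-space bound of $O(\sigma^2\log^2 n)$ bits. A single packet occupies $O(\sigma \log n)$ bits, so it suffices to bound the number of packets simultaneously stored on the DFS stack by $O(\sigma \log n)$. I would achieve this by adopting a \emph{heavy-child-first} traversal order: when expanding $W$, the Weiner-link child $cW$ whose range is the largest is processed immediately while the remaining lighter siblings are pushed for later. A heavy-path-style amortization, following Belazzougui's original proof, then shows that the ranges of pending packets on the stack shrink geometrically, capping the stack depth at $O(\log n)$ ``branching levels'' with at most $\sigma - 1$ siblings per level. Making this argument fully rigorous is the technically most delicate part of the proof.
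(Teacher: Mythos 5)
Your overall architecture (Weiner-link-tree DFS, packets storing $|W|$ and the children ranges, correctness via the suffix-link-tree parent argument) matches the paper's, but two of its quantitative claims do not go through as you have set them up.

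\emph{Time.} You compute the children of $cW$ by running $\mathtt{bwsearch}(\mathit{range}(Wc_i),c)$ for \emph{every} pair consisting of a left-extending character $c$ of $W$ and a right-extension $Wc_i$ of $W$. That is $k_c(W)\cdot k_W(W)$ calls at node $W$, and the calls returning empty ranges (pairs where $cW$ and $Wc_i$ occur but $cWc_i$ does not) cannot be charged to any suffix-tree or Weiner-link-tree edge, so your amortization breaks. The sum $\sum_W k_c(W)\cdot k_W(W)$ can be $\Theta(n\sigma)$: e.g.\ for $T=c_1xc_1\,c_2xc_2\cdots c_mxc_m\#$ the node $x$ has $m$ left-extensions and $m$ right-extensions but only the $m$ diagonal combinations $c_ixc_i$ occur, so you issue $m^2=\Theta(n^2)$ backward searches at a single node. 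The paper instead calls $\mathtt{rangeDistinct}$ (via $\mathtt{getIntervals}$) separately on each child range $BWT[\mathit{range}(Wc_i)]$, which enumerates only the \emph{occurring} extensions $cWc_i$; their total number is $O(n)$ because it is bounded by the number of explicit plus implicit Weiner links.

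\emph{Space.} Your heavy-child-first order is the reverse of Belazzougui's rule and defeats the stack bound. The $O(\sigma\log n)$ bound relies on pushing the children in decreasing order of range length so that the \emph{smallest} pending child is expanded first: then the node whose expansion creates the next sibling group on the stack is never the largest of its own group, so its range is at most half of its group's total and the group totals decay geometrically, giving $O(\log n)$ groups of at most $\sigma$ elements each. If you descend into the largest child first, the pending elements are the light siblings along the current heavy path, and the number of stack levels is bounded only by the length of that path, not by $\log n$. Concretely, if $a,aa,aaa,\dots$ are all right-maximal and each $a^i$ also has a second right-maximal left-extension $ba^i$ (achievable with a text of length $\Theta(k^2)$ for a chain of length $k$), your stack holds $\Theta(\sqrt n)$ packets, i.e.\ $\Theta(\sqrt n\,\sigma\log n)$ bits, exceeding the claimed $O(\sigma^2\log^2 n)$. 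The geometric-shrinking intuition you invoke is precisely the property of the light-first order, so the fix is simply to swap your priority: push the heavy children deepest and pop the lightest one immediately.
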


To keep the article self-contained, in this section we describe the algorithm at the core of the above result. 
Remember that explicit suffix tree nodes correspond to right-maximal substrings. 
%By definition, for any right-maximal substring $W$ there exist at least two distinct characters $c_1, \dots, c_k$ such that $Wc_i$ is a substring of $T$, for $i=1, \dots, k$. 
The first idea is to represent any substring $W$ (not necessarily right-maximal) as follows. Let $\mathtt{chars_W[1,k_W]}$ be the alphabetically-sorted character array such that $W\cdot \mathtt{chars_W[i]}$ is a substring of $T$ for all $i=1,\dots, k_W$, where $k_W$ is the number of right-extensions of $W$.
We require $\mathtt{chars_W}$ to be also complete: if $Wc$ is a substring of $T$, then $c\in \mathtt{chars_W}$.
Let moreover $\mathtt{first_W[1,k_W+1]}$ be the array such that $\mathtt{first_W[i]}$ is the starting position of (the range of) $W\cdot \mathtt{chars_W[i]}$ in the suffix array of $T$ for $i=1,\dots, k_W$, and $\mathtt{first_W[k_W+1]}$ is the end position of $W$ in the suffix array of $T$. The representation for $W$ is (differently from~\cite{Belazzougui:2014:LTC:2591796.2591885}, we omit $\mathtt{chars_W}$ from the representation and we add $|W|$; these modifications will turn useful later):
$$\mathtt{repr(W) = \langle \mathtt{first_W},\ |W| \rangle}$$

Note that, if $W$ is not right-maximal nor a text suffix, then $W$ is followed by $k_W=1$ distinct characters in $T$ and the above representation is still well-defined.
When $W$ is right-maximal, we will also say that $\mathtt{repr(W)}$ is the representation of a suffix tree explicit node (i.e. the node reached by following the path labeled $W$ from the root).

\paragraph{Weiner Link Tree Visit}\label{app:belazzougui}

The enumeration algorithm works by visiting the Weiner Link tree of $T$ starting from the root's representation, that is, $\mathtt{repr(\epsilon) = \langle \mathtt{first_\epsilon},\ 0 \rangle}$, where $\mathtt{first_\epsilon} = \langle C[c_1], \dots, C[c_\sigma], n \rangle$ (see Section \ref{sec:notation} for a definition of the $C$-array) and $c_1, \dots, c_\sigma$ 
are  the sorted alphabet's characters.
Since the suffix tree and the Weiner link tree share the same set of nodes, this is sufficient to enumerate all suffix tree nodes. 
The visit uses a stack storing representations of suffix tree nodes, initialized with  $\mathtt{repr(\epsilon)}$. At each iteration, we pop the head $\mathtt{repr(W)}$ from the stack and we push $\mathtt{repr(cW)}$ such that $cW$ is right-maximal in $T$. 
%If nodes are pushed on the stack in decreasing order of interval length, then the stack's size never exceeds $O(\sigma\log n)$.
%The enumeration algorithm works by visiting the Weiner link tree of the text.
%While this strategy guarantees that we will visit all and only the suffix tree's explicit nodes (see~\cite{Belazzougui:2014:LTC:2591796.2591885}), there are two main issues that need to be addressed. 
%First, the stack's size may grow in an uncontrolled way. 
To keep the stack's size under control,  once computed $\mathtt{repr(cW)}$ for the right-maximal left-extensions $cW$ of $W$ we push them on the stack in decreasing order of range length $\mathtt{range(cW)}$ (i.e. the node with the smallest range is pushed last). This guarantees that the stack will always contain at most $O(\sigma\log n)$ elements~\cite{Belazzougui:2014:LTC:2591796.2591885}. Since each element takes $O(\sigma\log n)$ bits to be represented, the stack's size never exceeds $O(\sigma^2\log^2 n)$ bits. 

\paragraph{Computing Weiner Links}

We now show how to efficiently compute the node representation $\mathtt{repr(cW)}$ from $\mathtt{repr(W)}$ for the characters $c$ such that $cW$ is right-maximal in $T$. In~\cite{Belazzougui:2014:LTC:2591796.2591885,belazzougui2016linear}  this operation is supported efficiently by first enumerating all \emph{distinct} characters in each range $BWT[\mathtt{first_W[i], first_W[i+1]}]$ for $i=1, \dots, k_W$, using function $\mathtt{BWT.rangeDistinct(first_W[i], first_W[i+1])}$ (see Section \ref{sec:notation}). 
%Using the notation of \cite{Belazzougui:2014:LTC:2591796.2591885}, let us call $\mathtt{rangeDistinct(i,j)}$  the operation that returns all distinct characters in $BWT[i,j]$.
Equivalently, for each $a\in \mathtt{chars_W}$ we want to list all distinct left-extensions $cWa$ of $Wa$.
Note that, in this way, we may also visit implicit suffix tree nodes (i.e. some of these left-extensions could be not right-maximal).
Stated otherwise, we are traversing all explicit \emph{and} implicit Weiner links. Since the number of such links is linear~\cite{Belazzougui:2014:LTC:2591796.2591885,belazzougui2014alphabet} (even including implicit Weiner links\footnote{To see this, first note that the number of right-extensions $Wa$ of $W$ that have only one left-extension $cWa$ is at most equal to the number of right-extensions of $W$; globally, this is at most the number of suffix tree's nodes (linear). Any other right-extension $Wa$ that has at least two distinct left-extensions $cWa$ and $bWa$ is, by definition, left maximal and corresponds therefore to a node in the suffix tree of the reverse of $T$. It follows that all left-extensions of $Wa$ can be charged to an edge of the suffix tree of the reverse of $T$ (again, the number of such edges is linear).}), globally the number of distinct characters returned by $\mathtt{rangeDistinct}$ operations is $O(n)$. An implementation of $\mathtt{rangeDistinct}$ on wavelet trees is discussed in \cite{beller2013computing} with the procedure \texttt{getIntervals} (this procedure actually returns more information: the suffix array range of each $cWa$). This implementation runs in $O(\log\sigma)$ time per returned character. Globally, we therefore spend $O(n\log\sigma)$ time using a wavelet tree. We now need to compute $\mathtt{repr(cW)}$ for all left-extensions of $W$ and keep only the right-maximal ones. Let $x=\mathtt{repr(W)}$ and $\mathtt{BWT.Weiner(x)}$ be the function that returns the representations of such strings (used in Line \ref{range distinct2} of Algorithm \ref{alg:fill nodes}).
This function can be implemented by observing that
$$
\begin{array}{lcl}
     \mathtt{range(cWa)} & = \mathtt{\langle}&  \mathtt{C[c] + BWT.rank_c(left(Wa))}, \\
     && \mathtt{ C[c] + BWT.rank_c(right(Wa)+1)-1 \ \rangle} \\
\end{array}
$$
where $a=\mathtt{chars_W[i]}$ for $1\leq i < |\mathtt{first_W}|$, and noting that $\mathtt{left(Wa)}$ and $\mathtt{right(Wa)}$ are available in $\mathtt{repr(W)}$. Note also that we do not actually need to know the value of characters $\mathtt{chars_W[i]}$ to compute the ranges of each $cW\cdot \mathtt{chars_W[i]}$; this is the reason why we can omit $\mathtt{chars_W}$ from $\mathtt{repr(W)}$.
Using a wavelet tree, the above operation takes $O(\log\sigma)$ time. By the above observations, the number of strings $cWa$ such that $W$ is right-maximal is bounded by $O(n)$. Overall, computing $\mathtt{repr(cW)} = \langle \mathtt{first_{cW}}, |W|+1 \rangle$ for all left-extensions $cW$ of all right-maximal strings $W$ takes therefore $O(n\log\sigma)$ time. Within the same running time, we can check which of those extensions is right maximal (i.e. those such that $|\mathtt{first_{cW}}|\geq 2$), sort them in-place by interval length (we always sort at most $\sigma$ node representations, therefore also sorting takes globally $O(n\log\sigma)$ time), and push them on the stack.

\section{Beller et al.'s Algorithm}\label{sec:beller}

The second ingredient used in our solutions is the following result, due to Beller et al (we slightly re-formulate their result to fit our purposes, read below for a description of the differences):

\begin{theorem}[Beller et al.\cite{beller2013computing}]\label{th:Beller}
	Given the Burrows-Wheeler Transform of a text $T$ represented with a wavelet tree, we can enumerate all pairs $(i,LCP[i])$ in $O(n\log\sigma)$ time using $5n$ bits of working space on top of the BWT.
\end{theorem}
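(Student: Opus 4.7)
My plan is to conduct a breadth-first search over the SA intervals of all substrings of $T$, organized by substring length. Starting from the level-$0$ collection $\mathcal{I}_0=\{[1,n]\}$ (the range of the empty string), I advance from level $\ell$ to level $\ell+1$ by invoking $\mathtt{getIntervals}(L,R,BWT)$ on each $[L,R]\in\mathcal{I}_\ell$: this returns the SA ranges of all length-$(\ell+1)$ left-extensions $cW$ of the length-$\ell$ string $W$ whose range is $[L,R]$. The correctness observation that drives the LCP output is that position $i$ is a left boundary of some interval in $\mathcal{I}_\ell$ precisely when the suffixes at SA positions $i-1$ and $i$ disagree within their first $\ell$ characters, i.e.\ when $LCP[i]<\ell$; consequently $LCP[i]=\ell$ iff $i$ first becomes a boundary at level $\ell+1$. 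I therefore maintain a ``discovery'' bitvector $B[1\ldots n+1]$ with $B[1]=B[n+1]=1$ initially, and whenever an enumerated left-extension interval $[L',R']$ supplies a previously-unset endpoint $p\in\{L',R'+1\}$, I output the pair $(p,\ell)$ and set $B[p]=1$.

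For the space bound, each level's queue is stored as two bitvectors of length $n+1$---one marking left endpoints and the other marking right-plus-one endpoints of the intervals currently held; since intervals at any fixed level are pairwise disjoint, the matching between $L$ and $R$ is recovered by pairing each set bit of the ``left'' vector with the first set bit of the ``right'' vector lying strictly to its right. The current-level queue, the next-level queue, and $B$ together comprise five length-$(n+1)$ bitvectors, giving $5n+O(1)$ bits, augmented with $o(n)$-bit rank/select support for efficient iteration. The next-level queue is filled on the fly: a returned $[L',R']$ is enqueued only if at least one of $L'$ or $R'+1$ was freshly added to $B$; otherwise it is discarded.

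The main obstacle is ensuring that this admission rule is consistent with the correctness of the LCP labels. Discarding an already-seen interval is safe because the same SA range $[L,R]$, which may be shared by several substrings that must form a chain of prefixes, produces the \emph{same} set of left-extension intervals regardless of which representative is used: every occurrence of the shorter string is extended to the longer one, so prepending a character yields the same range in each case. Hence $[L,R]$ is processed at its first-appearance level $\ell^\star$---the length of its shortest associated substring---and the LCP value $\ell^\star$ assigned to each freshly-discovered endpoint of its children is exactly the true LCP at that position. Combined with the facts that distinct SA intervals are in bijection with the explicit nodes of the suffix tree of $T$ (so there are $O(n)$ of them) and that the total number of left-extensions summed across them is $O(n)$ as well, the $O(\log\sigma)$ per-extension cost on the wavelet-tree BWT delivers the claimed $O(n\log\sigma)$ total running time.
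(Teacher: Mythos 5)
Your algorithm is essentially Beller et al.'s procedure as described in Section~\ref{sec:beller}: a level-by-level BFS over SA intervals driven by $\mathtt{getIntervals}$, with your discovery bitvector $B$ playing the role of $U$, and freshness of an endpoint triggering both the output of an LCP pair and admission to the next level's queue (testing both endpoints rather than only $R'+1$ is a harmless variant, and it does make the ``each distinct interval is popped at most once'' count clean). The genuine gap is in the time analysis of the queue. You commit to representing \emph{every} level's queue as two packed bitvectors of length $n+1$. Extracting a level's intervals from that encoding, and clearing it (or rebuilding its select support) for reuse, costs $\Omega(n/\log n)$ time per level \emph{regardless of how few intervals the level contains}, and the number of nonempty levels can be $\Theta(n)$: for $T=a^{n-1}\#$ the deepest level is $n-2$ and every level holds $O(1)$ intervals. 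Your scheme thus meets the $5n$-bit budget but runs in $\Theta(n^2/\log n)$ time in the worst case, not $O(n\log\sigma)$. This is precisely why Beller et al.\ (and Section~\ref{sec:beller}) use a \emph{hybrid} queue: a plain array of pairs while a level holds at most $n/\log n$ intervals (still $O(n)$ bits, constant-time push/pop), switching to the two-bitvector encoding only beyond that threshold; since only $O(n)$ intervals are ever enqueued, at most $O(\log n)$ levels exceed the threshold, so the $O(n/\log n)$ scanning cost amortizes to $O(n)$ overall. Without this case distinction the claimed running time does not follow.

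A secondary, fixable weak point is completeness. You assert that every shared interval ``is processed at its first-appearance level $\ell^\star$,'' but under your admission rule an interval is enqueued only if one of its endpoints is fresh \emph{at the moment it is generated}, and it is generated at all only if its parent was popped at the previous level. Closing this circularity requires the induction carried out in the proof of Lemma~\ref{lemma:proof of thm1}: if $LCP[i]=\ell+1$ then, writing the length-$(\ell+2)$ prefix of the $(i-1)$-th suffix as $abV$, one shows that $LCP[\mathtt{right}(bV)+1]$ equals $\ell$ \emph{exactly}, so $\mathtt{range(bV)}$ is generated with a fresh right endpoint, hence enqueued, hence popped, hence $\mathtt{range(abV)}$ is generated and $i$ is discovered at the correct level. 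Your superset admission rule inherits this completeness, but the argument for why the needed parent is guaranteed to be popped should be made explicit rather than asserted.
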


Theorem \ref{th:Beller} represents the state of the art for computing the LCP array from the BWT. 
Also Beller et al.'s algorithm works by enumerating a (linear) subset of the BWT intervals. LCP values are induced from a particular visit of those intervals. Belazzougui's and Beller et al.'s algorithms have, however, two key differences which make the former more space-efficient on small alphabets, while the latter more space-efficient on large alphabets: 	(i) Beller et al. use a queue (FIFO) instead of a stack (LIFO), and (ii) they represent $W$-intervals with just the pair of coordinates $\mathtt{range(W)}$ and the value $|W|$.
In short, while Beller et al.'s queue might grow up to size $\Theta(n)$, the use of intervals (instead of the more complex representation used by Belazzougui) makes it possible to represent it using $O(1)$ bitvectors of length $n$. On the other hand, the size of Belazzougui's stack can be upper-bounded by $O(\sigma\log n)$, but its elements take more space to be represented. 

We now describe in detail Beller et al.'s result.
We keep a bitvector $U[1,n]$ such that $U[i]=0$ if and only if the pair $(i,LCP[i])$ has not been output yet. In their original algorithm, Beller et al. use the LCP array itself to mark undefined LCP entries. In our case, we don't want to store the whole LCP array (for reasons that will be clear in the next sections) and thus we only record which LCP values have been output. Bitvector $U$ accounts for the additional $n$ bits used by Theorem \ref{th:Beller} with respect to the original result described in~\cite{beller2013computing}.
At the beginning, $U[i]=0$ for all $i=1, \dots, n$.
Beller et al.'s algorithm starts by inserting in the queue the triple $\langle1,n,0 \rangle$, where the first two components are the BWT interval of $\epsilon$ (the empty string) and the third component is its length. From this point, the algorithm keeps performing the following operations until the queue is empty. We remove the first (i.e. the oldest) element $\langle L,R,\ell\rangle$ from the queue, which (by induction) is the interval and length of some string $W$: $\mathtt{range(W)}= \langle L,R \rangle$ and $|W|=\ell$. 
Using  operation $\mathtt{getIntervals(L,R,BWT)}$~\cite{beller2013computing} (see Section \ref{sec:notation})
we left-extend the BWT interval $\langle L,R\rangle$ with the characters $c_1, \dots, c_k$ in $\mathtt{BWT.rangeDistinct(L,R)}$, obtaining the triples $\langle L_1, R_1, \ell+1 \rangle, \dots, \langle L_k, R_k, \ell+1 \rangle$ corresponding to the strings $c_1W, \dots, c_kW$. For each such triple $\langle L_i, R_i, \ell+1\rangle$, if $R_i\neq n$ and $U[R_i+1] = 0$ then we set $U[R_i+1] \leftarrow 1$, we output the LCP pair $(R_i+1, \ell)$ and push $\langle L_i, R_i, \ell+1\rangle$ on the queue. Importantly, note that we can push the intervals returned by $\mathtt{getIntervals(L,R,BWT)}$ in the queue in any order; as discussed in Section \ref{sec:notation}, this step can be implemented with just $O(\log n)$ bits of space overhead with a DFS-visit of the wavelet tree's sub-tree induced by $BWT[L,R]$ (i.e. the intervals are not stored temporarily anywhere: they are pushed as soon as they are generated).
%To limit space usage, Beller et al. use two different queue representations. As long as there are $O(n/\log n)$ elements in the queue, they use a simple vector. When there are more intervals, they switch to a representation based on four bitvectors of length $n$ that still guarantees constant amortized operations. 
%All details are described in Appendix \ref{app:beller}. Beller et al.~\cite{beller2013computing} show that the above algorithm correctly computes the LCP array of a text. In the next section we generalize the algorithm to text collections. 

\paragraph{Queue implementation} To limit space usage, Beller et al. use the following queue representations. First note that, at each time point, the queue's triples are partitioned into a (possibly empty) sequence with associated length (i.e. the third element in the triples) $\ell+1$, followed by a sequence with associated length $\ell$, for some $\ell$. 
To simplify the description, let us assume that these two sequences are kept as two distinct queues, indicated in the following as $Q_\ell$ and $Q_{\ell+1}$. At any stage of the algorithm, we pop from $Q_\ell$ and push into $Q_{\ell+1}$. It follows that there is no need to store strings' lengths in the triples themselves (i.e. the queue's elements become just ranges), since the length of each element in $Q_\ell$ is $\ell$.
When $Q_\ell$ is empty, we create a new empty queue $Q_{\ell+2}$, pop from $Q_{\ell+1}$, and push into $Q_{\ell+2}$ (and so on).
Beller et al. represent $Q_\ell$ as follows. While pushing elements in $Q_\ell$, as long as its size does not exceed $n/\log n$ we represent it as a vector of pairs (of total size at most $O(n)$ bits). This representation supports push/pop operations in (amortized) constant time and takes at most $O(\log n \cdot n/\log n) = O(n)$ bits of space. 
As soon as $Q_\ell$'s size exceeds $n/\log n$, we switch to a representation that uses two packed bitvectors of length $n$ storing, respectively, the left- and right-most boundaries of the ranges in the queue. 
Note that this representation can be safely used since the pairs in $Q_\ell$ are suffix array ranges of strings of some fixed length $\ell$, therefore there cannot be overlapping intervals.
Pushing an interval into such a queue takes constant time (it just requires setting two bits). Popping all the $t = |Q_\ell|$ intervals, on the other hand, can easily be implemented in $O(t+ n/\log n)$ time by scanning the bitvectors and exploiting word-parallelism (see \cite{beller2013computing} for all details). Since Beller et al.'s procedure visits $O(n)$ SA intervals, $Q_\ell$ will exceed size $n/\log n$ for at most $O(\log n)$ values of $\ell$. It follows that also with this queue representation pop operations take amortized constant time. 

\paragraph{Time complexity} It is easy to see that the algorithm inserts in total a linear number of intervals in the queue since an interval $\langle L_i, R_i, \ell+1 \rangle$ is inserted only if $U[R_i+1] = 0$, and successively $U[R_i+1]$ is set to $1$. Clearly, this can happen at most $n$ times. 
In~\cite{beller2013computing} the authors moreover show that, even when counting the left-extensions of those intervals (computed after popping each interval from the queue), the total number of generated intervals stays linear.
Overall, the algorithm runs therefore in $O(n\log\sigma)$ time (as discussed in Section \ref{sec:notation}, $\mathtt{getIntervals}$ runs in $O(\log\sigma)$ time per returned element).

\section{Enumerating LCP values}\label{sec:LCP}

In this section we prove our first main result: how to enumerate LCP pairs $(i,LCP[i])$ using succinct working space on top of a wavelet tree representing the BWT. 
Later we will use this procedure to build the LCP and PLCP arrays in small space on top of a plain representation of the BWT.
We give our lemma in the general form of string collections, which will require adapting the algorithms seen in the previous sections to this more general setting. Our first observation is that Theorem \ref{th:Belazzougui}, extended to string collections as described below, can be directly used to enumerate LCP pairs $(i,LCP[i])$ using just $O(\sigma^2\log^2n)$ bits of working space on top of the input and output. We combine  this procedure with an extended version of Beller et al.'s algorithm working on string collections in order to get small working space for all alphabets. Algorithms \ref{alg:fill nodes} and \ref{alg:fill leaves} report our complete procedure; read below for an exhaustive description. We obtain our first main result: 

\begin{lemma}\label{thm:LCP collection}
	Given a wavelet tree for the Burrows-Wheeler Transform of a collection $\bigS = \{T_1, \dots, T_m\}$ of total length $n$ on alphabet $[1,\sigma]$, we can enumerate all pairs $(i, LCP[i])$ in $O(n\log\sigma)$ time using $o(n\log\sigma)$ bits of working space on top of the BWT.
\end{lemma}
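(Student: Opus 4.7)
The plan is to combine Belazzougui's enumeration of right-maximal strings (Theorem~\ref{th:Belazzougui}) with Beller et al.'s queue-based enumeration (Theorem~\ref{th:Beller}) and pick, depending on the alphabet size $\sigma$, whichever of the two stays within $o(n\log\sigma)$ bits of working space. Belazzougui's method uses $O(\sigma^2\log^2 n)$ bits, which is $o(n\log\sigma)$ when $\sigma$ is small; Beller et al.'s method uses $5n$ bits, which is $o(n\log\sigma)$ whenever $\log\sigma = \omega(1)$. A threshold $\sigma_0$ satisfying $\sigma_0^2\log^2 n = o(n\log\sigma_0)$ and $n = o(n\log\sigma_0)$ simultaneously---for instance $\sigma_0 = \sqrt{n}/\log^{3/2} n$---then lets us branch on $\sigma \leq \sigma_0$ versus $\sigma > \sigma_0$, obtaining a single algorithm that meets both the space and time bounds of the lemma.

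The key step is to read LCP values directly out of Belazzougui's visit. For each right-maximal $W$ visited, with representation $\mathtt{repr(W)}=\langle \mathtt{first_W},\ |W|\rangle$, the internal boundaries $\mathtt{first_W[2]},\dots,\mathtt{first_W[k_W]}$ separate the ranges of $Wc_1,\dots,Wc_{k_W}$. At every such position $p = \mathtt{first_W[i]}$, the two generalized suffixes at $GSA[p-1]$ and $GSA[p]$ begin with $Wc_{i-1}$ and $Wc_i$ respectively, so their longest common prefix is exactly $|W|$; we can therefore emit the pair $(p,|W|)$ on the fly as soon as $W$ is popped from the stack. Conversely, any position $p > 1$ is an internal boundary of some right-maximal $W$ (the common prefix of the two adjacent suffixes is followed by two distinct characters, hence is right-maximal), so every pair $(p, LCP[p])$ is emitted exactly once over the course of the enumeration. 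Algorithms~\ref{alg:fill nodes} and~\ref{alg:fill leaves} realise this emission on top of Belazzougui's and Beller et al.'s traversals respectively, and the appropriate one is selected according to $\sigma$.

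Adapting the two procedures to a string collection is essentially syntactic. The convention that two equal suffixes of length $\ell$ contribute LCP $\ell-1$ is obtained by treating each terminator $\#_i$ as distinct during sorting, which in turn means that in the root representation $\mathtt{repr(\epsilon)}$ each $\#_i$ contributes its own singleton range to $\mathtt{first_\epsilon}$. The rank, $\mathtt{getIntervals}$, and $\mathtt{rangeDistinct}$ primitives on the wavelet tree of the extended BWT behave identically to the single-string case, preserving both the $O(n\log\sigma)$ time bound and the respective space bounds. The main obstacle is to make the threshold split airtight: for $\sigma = O(1)$, Beller et al.'s $5n$ bits is $\Theta(n)\neq o(n\log\sigma)$, and for $\sigma$ close to $n$ Belazzougui's stack is too large, so neither algorithm alone suffices; the inequalities above on $\sigma_0$ are exactly what is needed to cover the whole spectrum by the union of the two regimes.
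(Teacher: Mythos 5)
Your high-level plan---branch on the alphabet size around a threshold of the form $\sqrt{n}/\log^{1+\epsilon}n$ and read node-boundary LCP values off the enumeration of right-maximal strings---matches the paper, and your threshold $\sqrt{n}/\log^{3/2}n$ satisfies both inequalities. However, there is a genuine gap in your completeness argument. You claim that \emph{every} position $p>1$ is an internal boundary of some right-maximal $W$ because ``the common prefix of the two adjacent suffixes is followed by two distinct characters.'' This is true for a single text with a unique terminator, but it fails for a collection: two adjacent suffixes $GSA[p-1]$ and $GSA[p]$ can be \emph{equal} strings $W\#$ (repeated suffixes, or even identical reads), in which case $LCP[p]=|W|$ by the paper's convention, yet $p$ lies strictly \emph{inside} the range of the single child $W\#$ rather than on a boundary between two children. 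No pair $(p,LCP[p])$ is ever emitted for such positions by the node-boundary rule, and these are exactly the ``leaf-type'' values that the paper computes with a separate traversal (Algorithm~\ref{alg:fill leaves}): one enumerates the ranges $\langle L,R\rangle$ of all strings $W\#$ with $\#$ not occurring in $W$, starting from $\mathtt{range(\#)}$ and recursively left-extending with non-terminator symbols, and outputs $(i,|W|)$ for all $i\in[L+1,R]$. Your reading of Algorithms~\ref{alg:fill nodes} and~\ref{alg:fill leaves} as ``the Belazzougui variant'' versus ``the Beller variant,'' to be selected by $\sigma$, is therefore a misreading: the two algorithms compute \emph{disjoint} sets of LCP entries and both must be run; the $\sigma$-dependent choice of data structure happens inside each of them.

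Your proposed fix---treating each terminator $\#_i$ as a distinct character so that it ``contributes its own singleton range to $\mathtt{first_\epsilon}$''---does not repair this. The input wavelet tree stores a single symbol $\#$, so $\mathtt{rangeDistinct}$ and $\mathtt{getIntervals}$ cannot separate the $\#_i$; and even if they could, the effective alphabet would grow to $\sigma+m$, inflating each stack element to $O((\sigma+m)\log n)$ bits and destroying the $O(\sigma^2\log^2 n)$ bound on which the small-alphabet branch relies (for $m=\Theta(n)$ this is catastrophic). The correct adaptation is the opposite: never left-extend with $\#$, output the pairs $(C[c],0)$ separately, and add the dedicated leaf enumeration above, whose queue/stack holds plain intervals of $O(\log n)$ bits each and therefore fits in $o(n\log\sigma)$ bits in both alphabet regimes.
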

\begin{proof}
	%First, we replace $T$ by its wavelet matrix~\cite{claude2015wavelet} --- of size $n\log\sigma + o(n\log\sigma)$ bits --- in $O(n\log\sigma)$ time using just $n$ bits of additional working space as shown in~\cite{claude2015wavelet}. Wavelet matrices support the same set of operations of wavelet trees in the same running times (indeed, they can be considered as a wavelet tree representation). We re-use a portion ($n$ bits) of the LCP array's space (which always takes $\geq n$ bits) to accommodate the extra $n$ bits required for building the wavelet matrix, so the overall working space does not exceed $o(n\log\sigma)$ bits on top of the BWT and LCP. In the rest of the paper we will simply assume that the input is represented by a wavelet tree.
    If $\sigma < \sqrt{n}/\log^2n$ then $\sigma^2\log^2n = o(n)$ and our extension of Theorem \ref{th:Belazzougui} gives us $o(n\log\sigma)$ additional working space. If $\sigma \geq \sqrt{n}/\log^2n$ then $\log\sigma = \Theta(\log n)$ and we can use our extension to string collections of Theorem \ref{th:Beller}, which yields extra working space $O(n) = o(n\log n) = o(n\log\sigma)$. Note that, while we used the threshold $\sigma < \sqrt{n}/\log^2n$, any threshold of the form $\sigma < \sqrt{n}/\log^{1+\epsilon}n$, with $\epsilon>0$ would work. The only constraint is that $\epsilon>0$, since otherwise for $\epsilon=0$ the working space would become $O(n\log\sigma)$ for constant $\sigma$ (not good since we aim at $o(n\log\sigma)$).  
\end{proof}

We now describe all the details of our extensions of Theorems \ref{th:Belazzougui} and \ref{th:Beller} used in the proof of Lemma \ref{thm:LCP collection}.
Procedure \texttt{BGOS(BWT)} in Line \ref{beller et al.} of Algorithm \ref{alg:fill nodes} is a call to Beller et al.'s algorithm, modified as follows. First, we enumerate the LCP pairs $(C[c], 0)$ for all $c\in\Sigma$. Then, we push in the queue $\langle \mathtt{range(c), 1} \rangle$ for all $c\in\Sigma$ and start the main algorithm. Note moreover that (see Section \ref{sec:notation}) from now on we never left-extend ranges with $\#$.

%GIO: tolto
%Let $\bigS$ be a text collection where 
Recall that each string of a text collection $\bigS$ is ended by a terminator $\#$ (common to all strings). 
Consider now the LCP and GSA 
%GIO: tolto
%%%(generalized Suffix Array) 
arrays of $\bigS$. We divide LCP values in two types. 
Let $GSA[i] = \langle j,k \rangle$, with $i>1$, indicate that the $i$-th suffix in the lexicographic ordering of all suffixes of strings in $\bigS$ is $\bigS[j][k,|\bigS[j]|]$. A LCP value $\mathtt{LCP[i]}$ is of \emph{node type} when the $i$-th and $(i-1)$-th suffixes are distinct: $\bigS[j][k,|\bigS[j]|] \neq \bigS[j'][k',|\bigS[j']|]$, where $GSA[i] = \langle j,k \rangle$ and $GSA[i-1] = \langle j',k' \rangle$. Those two suffixes differ before the terminator is reached in both suffixes (it might be reached in one of the two suffixes, however); we use the name \emph{node-type} because $i-1$ and $i$ are the last and first suffix array positions of the ranges of two adjacent children of some suffix tree node, respectively (i.e. the node corresponding to string $\bigS[j][k,k+LCP[i]-1]$).
Note that it might be that one of the two suffixes, $\bigS[j][k,|\bigS[j]|]$ or $\bigS[j'][k',|\bigS[j']|]$, is the string ``$\#$''.
Similarly, a \emph{leaf-type} LCP value $\mathtt{LCP[i]}$ is such that the $i$-th and $(i-1)$-th suffixes are equal: $\bigS[j][k,|\bigS[j]|] = \bigS[j'][k',|\bigS[j']|]$. We use the name \emph{leaf-type} because, in this case, it must be the case that $i \in [L+1,R]$, where $\langle L,R \rangle$ is the suffix array range of some suffix tree leaf (it might be that $R>L$ since there might be repeated suffixes in the collection). Note that, in this case, $\bigS[j][k,|\bigS[j]|] = \bigS[j'][k',|\bigS[j']|]$ could coincide with $\#$. Entry $LCP[0]$ escapes the above classification, so we output it separately.

Our idea is to compute first node-type and then leaf-type LCP values. We argue that Beller et al.'s algorithm already computes the former kind of LCP values. When this algorithm uses too much space (i.e. on small alphabets), we show that Belazzougui's enumeration strategy can be adapted to reach the same goal: by the very definition of node-type LCP values, they lie between children of some suffix tree node $x$, and their value corresponds to the string depth of $x$. This strategy is described in Algorithm \ref{alg:fill nodes}. Function $\mathtt{BWT.Weiner(x)}$ in Line \ref{range distinct2} takes as input the representation of a suffix tree node $x$ and returns all explicit nodes reached by following Weiner links from $x$ (an implementation of this function is described in Section~\ref{sec:belazzougui}).
Leaf-type LCP values, on the other hand, can easily be computed by enumerating intervals corresponding to suffix tree leaves. To reach this goal, it is sufficient to enumerate ranges of suffix tree leaves starting from $\mathtt{range(\#)}$ and recursively left-extending with backward search with characters different from $\#$ whenever possible. For each range $\langle L,R \rangle$ obtained in this way, we set each entry $LCP[L+1,R]$ to the string depth (terminator excluded) of the corresponding leaf. 
This strategy is described in Algorithm \ref{alg:fill leaves}.
In order to limit space usage, we use again a stack or a queue to store leaves and their string depth (note that each leaf takes $O(\log n)$ bits to be represented): we use a queue when $\sigma > n/\log^3n$, and a stack otherwise.
The queue is the same used by Beller et al.\cite{beller2013computing} and described in Section \ref{sec:beller}.
This guarantees that the bit-size of the queue/stack never exceeds $o(n\log\sigma)$ bits: since leaves take just $O(\log n)$ bits to be represented and the stack's size never contains more than $O(\sigma\cdot \log n)$ leaves, the stack's bit-size never exceeds $O(n/\log n) = o(n)$ when $\sigma \leq n/\log^3n$. Similarly, Beller et al's queue always takes at most $O(n)$ bits of space, which is $o(n\log\sigma)$ for $\sigma > n/\log^3n$.
%Again, we note that (as discussed in Appendix \ref{sec:notation}), procedure $\mathtt{getIntervals(L,R,BWT)}$ in Line \ref{push7} returns intervals on-the-fly through a visit of the wavelet tree's subtree induced by $BWT[L,R]$; in particular, the intervals are pushed in the queue as soon as they are visited and never stored anywhere (except in the queue). 
Note that in Lines \ref{getIntervals1}-\ref{push2} we can afford storing temporarily the $k$ resulting intervals since, in this case, the alphabet's size is small enough. 

To sum up, our full procedure works as follows: (1), we output node-type LCP values using procedure \texttt{Node-Type}$(\mathtt{BWT})$ described in Algorithm \ref{alg:fill nodes}, and (2) we output leaf-type LCP values using procedure \texttt{Leaf-Type}$(\mathtt{BWT})$ described in Algorithm \ref{alg:fill leaves}.

\begin{algorithm}
  \begin{algorithmic}[1]
  
	\If{$\sigma > \sqrt n/\log^2n$}
	  \State $\mathtt{BGOS(BWT)}$ \Comment Run Beller et al.'s algorithm\label{beller et al.}
	\Else 
	  \State $\mathtt P \leftarrow \texttt{new\_stack()}$\Comment Initialize new stack\label{new stack2}
	  \State $\mathtt P\mathtt{.push( repr(\epsilon))}$\Comment Push representation of $\epsilon$ \label{push3}
	  \While{$\mathtt{\mathbf{not}\ P.empty()}$}\label{while2}
	  
	     \State $\langle \mathtt{first_W},\ \ell \rangle \leftarrow \mathtt{P.pop()}$\Comment Pop highest-priority element\label{pop2}	
         \State $t \leftarrow |\mathtt{first_W}|-1$\Comment Number of children of ST node\label{nchild}
	  
	     \For{$i = 2, \dots, t$}
	        \State \textbf{output} $(\mathtt{first_W}[i], \ell)$\Comment Output LCP value\label{LCP in Node}
         \EndFor	  
         
         \State $x_1,\dots, x_k \leftarrow \mathtt{BWT.Weiner(\langle first_W,\ \ell \rangle)}$\Comment Follow Weiner Links\label{range distinct2}
		 \State $x'_1, \dots, x'_k \leftarrow \mathtt{sort}(x_1, \dots, x_k)$\Comment Sort by interval length\label{sort2}
         
         \For{$i=k\dots 1$}
				
            \State $\mathtt{P.push}(x'_i)$\Comment Push representations\label{push4}

         \EndFor
         
	  \EndWhile
	  
	\EndIf

 \caption{\texttt{Node-Type(BWT)}}\label{alg:fill nodes}
\end{algorithmic}
\end{algorithm}

\begin{algorithm}
  \begin{algorithmic}[1]
  
       \For{$i=left(\#),\dots, right(\#)$}
        \State \textbf{output} $(i,0)$
      \EndFor
      
     \If{$\sigma > n/\log^3n$}

       \State $\mathtt P \leftarrow \mathtt{new\_queue()}$\Comment{Initialize new queue}\label{new queue1}

     \Else
     
       \State $\mathtt P \leftarrow \mathtt{new\_stack()}$\Comment{Initialize new stack}\label{new stack1}

     \EndIf		

	\State $\mathtt P\mathtt{.push( BWT.range(\#),0)}$\Comment{Push range of terminator and LCP value 0}\label{push1}

    \While{$\mathtt{\mathbf{not}\ P.empty()}$}\label{while1}

     \State $\langle \langle L,R \rangle, \ell \rangle \leftarrow \mathtt{P.pop()}$\Comment{Pop highest-priority element}\label{pop1}	
  
     \For{$i=L+1\dots R$}
        \State \textbf{output} $(i, \ell)$\Comment{Output LCP inside range of ST leaf}\label{LCP in Leaves}
      \EndFor

	\If{$\sigma > n/\log^3n$}
  
       \State $\mathtt{P.push(getIntervals(L, R, BWT), \ell+1)}$\Comment{Pairs $\langle$interval,$\ell+1\rangle$}\label{push7}
				
    \Else
      \State $\langle L_i, R_i\rangle_{i=1, \dots, k} \leftarrow \mathtt{getIntervals(L, R,BWT)}$\label{getIntervals1}
      \State $\langle L'_i, R'_i\rangle_{i=1, \dots, k} \leftarrow \mathtt{sort}(\langle L_i, R_i\rangle_{i=1, \dots, k})$\Comment{Sort by interval length}\label{sort1}

      \For{$i=k\dots 1$}
					
		\State $\mathtt{P.push}(\langle L'_i, R'_i\rangle, \ell+1)$\Comment{Push in order of decreasing length}\label{push2}

      \EndFor

    \EndIf

    \EndWhile

 \caption{\texttt{Leaf-Type(BWT)}}\label{alg:fill leaves}
\end{algorithmic}
\end{algorithm}

The correctness, completeness, and complexity of our procedure are proved in the following Lemma: 

\begin{lemma}\label{lemma:proof of thm1}
	Algorithms \ref{alg:fill nodes} and \ref{alg:fill leaves} correctly output all LCP pairs $(i,LCP[i])$ of the collection in $O(n\log\sigma)$ time using $o(n\log\sigma)$ bits of working space on top of the input BWT.
\end{lemma}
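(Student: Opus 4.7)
The plan is to verify three things: (a) every LCP pair of the collection is output exactly once, (b) no spurious pair is output, and (c) the total running time is $O(n\log\sigma)$ and the working space is $o(n\log\sigma)$ bits. I would organise the argument around the node-type vs.\ leaf-type dichotomy introduced before the algorithms, treating $LCP[0]$ as a trivial separate case.

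First I would handle correctness. For \texttt{Node-Type} in the large-$\sigma$ branch, correctness is inherited from Theorem \ref{th:Beller} (in its collection-extended form): the modification that pre-seeds the queue with $\langle \mathtt{range}(c), 1\rangle$ for every $c\in\Sigma$ and that never left-extends with $\#$ restricts enumeration to the exactly those SA intervals whose boundaries carry node-type LCP values. In the small-$\sigma$ branch I would argue directly: by the invariant that every popped representation $\langle \mathtt{first_W},\ell\rangle$ corresponds to an explicit suffix-tree node $W$ with $|W|=\ell$ and $\mathtt{first_W}[i]$ being the left boundary of the $i$-th child's SA range, the entries $\mathtt{first_W}[2],\dots,\mathtt{first_W}[t]$ are exactly the SA positions $j$ where the suffix $SA[j{-}1]$ and $SA[j]$ branch at depth $\ell$, i.e.\ where $LCP[j]=\ell$ and is of node type. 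Since the Weiner-link visit (Theorem \ref{th:Belazzougui}) enumerates every explicit suffix-tree node once, every node-type pair is emitted exactly once. For \texttt{Leaf-Type}, I would first observe that the initial loop handles all suffixes prefixed by $\#$ (for which the LCP with the previous suffix is $0$, including $LCP[0]$ if present). Then I would argue by induction on the length $\ell$ of left-extensions that each SA range $\langle L,R\rangle$ popped with depth $\ell$ is the range of some leaf $W\cdot\#$ with $|W|=\ell$: backward searching from $\mathtt{range}(\#)$ with non-$\#$ characters reaches every such range exactly once, and for $i\in[L{+}1,R]$ we have $LCP[i]=\ell$ of leaf type by the definition of leaf-type LCP (two equal suffixes differ only at the terminator, which is considered distinct only when sorting).

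Second I would handle running time. Both branches of \texttt{Node-Type} run in $O(n\log\sigma)$: the large-$\sigma$ branch by Theorem \ref{th:Beller}, the small-$\sigma$ branch by Theorem \ref{th:Belazzougui}, whose amortised cost is dominated by the $O(n)$ wavelet-tree range-distinct calls each costing $O(\log\sigma)$. For \texttt{Leaf-Type}, the number of distinct leaf ranges produced by backward-searching non-$\#$ extensions is $O(n)$ (each range corresponds to a distinct left-extension of $\#$), each extension uses a $\mathtt{getIntervals}$ call costing $O(\log\sigma)$ per returned child, and the inner loop emitting $(i,\ell)$ performs $R-L$ work which sums telescopically to $O(n)$ across all popped ranges. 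Thus the total is $O(n\log\sigma)$.

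Third I would handle space. For \texttt{Node-Type}: in the large-$\sigma$ regime Beller et al.'s structure uses $O(n)$ bits, which is $o(n\log\sigma)$ because $\sigma>\sqrt n/\log^2 n$ implies $\log\sigma=\Theta(\log n)$; in the small-$\sigma$ regime Belazzougui's stack holds $O(\sigma\log n)$ node representations of $O(\sigma\log n)$ bits each, giving $O(\sigma^2\log^2 n)=o(n)$ bits since $\sigma\le\sqrt n/\log^2 n$. For \texttt{Leaf-Type}: when $\sigma>n/\log^3 n$ the Beller-style queue uses $O(n)$ bits, which is $o(n\log\sigma)$ because $\log\sigma=\Theta(\log n)$; when $\sigma\le n/\log^3 n$ the stack holds at most $O(\sigma\log n)$ leaf entries of $O(\log n)$ bits each, giving $O(\sigma\log^2 n)=o(n)$ bits. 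In both cases the auxiliary bitvector $U$ of Theorem \ref{th:Beller} (if used) and all temporary $\sigma$-sized buffers for sorting children fit within these budgets.

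The main obstacle I anticipate is the correctness argument for the small-$\sigma$ \texttt{Leaf-Type} branch in the presence of repeated suffixes across the collection: one must check that when $R>L$ for a leaf range, the shared suffix has length exactly $\ell$ (excluding the terminator) so that outputting $\ell$ for every position in $[L{+}1,R]$ is correct, and one must also verify that the leaf-type values between two adjacent leaves (the ``boundary'' LCPs at positions $L$) have already been emitted by \texttt{Node-Type} as node-type values at the lowest common ancestor of those leaves. Making this partition clean — node-type and leaf-type values together cover $\{LCP[1],\dots,LCP[n-1]\}$ disjointly, and $LCP[0]$ is output explicitly — is the delicate bookkeeping step I would write out most carefully.
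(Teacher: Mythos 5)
Your plan follows the paper's proof essentially verbatim: the same node-type/leaf-type decomposition with $LCP[0]$ handled separately, the same alphabet-size case split (inheriting the inductive correctness/completeness argument from Beller et al.\ for large $\sigma$ and arguing via the Weiner-link visit invariant for small $\sigma$), the same leaf-range induction for Algorithm \ref{alg:fill leaves}, and the same time/space accounting for the stack versus queue regimes. The ``delicate bookkeeping'' you flag at the end---that node-type and leaf-type values disjointly cover all entries and that a leaf range $\langle L,R\rangle$ contributes only positions $L{+}1,\dots,R$---is exactly the partition the paper establishes before stating the lemma, so the proposal is correct and requires no further changes in approach.
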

\begin{proof}
	
	\emph{Correctness - Algorithm \ref{alg:fill nodes}}. We start by proving that Beller et al.'s procedure in Line \ref{beller et al.} of Algorithm \ref{alg:fill nodes} (procedure \texttt{BGOS(BWT)}) outputs all the node-type LCP entries correctly. The proof  proceeds by induction on the LCP value $\ell$ and follows the original proof of~\cite{beller2013computing}.
	At the beginning, we insert in the queue all $c$-intervals, for $c\in\Sigma$. For each such interval $\langle L,R \rangle$ we output $LCP[R+1]=\ell = 0$. It is easy to see that after this step all and only the node-type LCP values equal to 0 have been correctly computed. 
	Assume, by induction, that all node-type LCP values less than or equal to $\ell$ have been correctly output, and that we are about to extract from the queue the first triple $\langle L,R,\ell+1 \rangle$ having length $\ell+1$. For each extracted triple with length $\ell+1$ associated to a string $W$, consider the triple $\langle L',R',\ell+2 \rangle$ associated to one of its left-extensions $cW$. If $LCP[R'+1]$ has been computed, i.e. if $U[R'+1]=1$, then we have nothing to do. However, if $U[R'+1]=0$, then it must be the case that (i) the corresponding LCP value satisfies $LCP[R'+1] \geq \ell+1$, since by induction we have already computed all node-type LCP values smaller than or equal to $\ell$, and (ii) $LCP[R'+1]$ is of node-type, since otherwise the BWT interval of $cW$ would also include position $R'+1$. On the other hand, it cannot be the case that $LCP[R'+1] > \ell+1$ since otherwise the $cW$-interval would include position $R'+1$. We therefore conclude that $LCP[R'+1] = \ell+1$ must hold.

	\emph{Completeness - Algorithm \ref{alg:fill nodes}}. The above argument settles correctness; to prove completeness, assume that, at some point, $U[i] = 0$ and the value of $LCP[i]$ to be computed and output is $\ell+1$. We want to show that we will pull a triple $\langle L,R,\ell+1 \rangle$ from the queue corresponding to a string $W$ (note that $\ell+1=|W|$ and, moreover, $W$ could end with $\#$) such that one of the left-extensions $aW$ of $W$ satisfies $\mathtt{range(aW)} = \langle L',i-1 \rangle$, for some $L'$. This will show that, at some point, we will output the LCP pair $(i,  \ell+1)$. 
	We proceed by induction on $|W|$.
	Note that we separately output all LCP values equal to 0.
	The base case $|W|=1$ is easy: by the way we initialized the queue, $\langle \mathtt{range(c)}, 1\rangle$, for all $c\in\Sigma$, are the first triples we pop. Since we left-extend these ranges with all alphabet's characters except $\#$, it is easy to see that all LCP values equal to 1 have been output. From now on we can therefore assume that we are working on LCP values equal to $\ell+1>1$, i.e. $W=b\cdot V$, for $b\in\Sigma-\{\#\}$ and $V\in \Sigma^+$.
	Let $abV$ be 
	the length-$(\ell+2)$ left-extension of $W=bV$ 
	such that $\mathtt{right(abV)+1} = i$. Since, by our initial hypothesis, $\mathtt{LCP[i]} = \ell+1$, the collection contains also a suffix $aU$ lexicographically larger than $abV$ and such that $\mathtt{LCP(aU,abV)} = \ell+1$. But then, it must be the case that $\mathtt{LCP(right(bV)+1)} = \ell$ (it cannot be smaller by the existence of $U$ and it cannot be larger since $|bV|=\ell+1$). By inductive hypothesis, this value was set after popping a triple $\langle L'', R'', \ell\rangle$ corresponding to string $V$, left-extending $V$ with $b$, and pushing $\langle \mathtt{range(bV)}, \ell+1 \rangle$ in the queue. 
	This ends the completeness proof since we showed that $\langle \mathtt{range(bV)}, \ell+1 \rangle$ is in the queue, so at some point we will pop it, extend it with $a$, and output $(right(abV)+1,\ell+1) = (i,\ell+1)$.
	If the queue uses too much space, then Algorithm \ref{alg:fill nodes} switches to a stack and Lines \ref{new stack2}-\ref{push4} are executed instead of Line \ref{beller et al.}. Note that this pseudocode fragment corresponds to Belazzougui's enumeration algorithm, except that now we also set LCP values in Line \ref{LCP in Node}. By the enumeration procedure's correctness, we have that, in Line \ref{LCP in Node}, $\langle \mathtt{first_W[1]}, \mathtt{first_W[t+1]} \rangle$ is the SA-range of a right-maximal string $W$ with $\ell = |W|$, and $\mathtt{first_W[i]}$ is the first position of the SA-range of $Wc_i$, with $i=1,\dots,t$, where $c_1, \dots, c_2$ are all the (sorted) right-extensions of $W$. Then, clearly each LCP value in Line \ref{LCP in Node} is of node-type and has value $\ell$, since it is the LCP between two strings prefixed by $W\cdot \mathtt{chars_W[i-1]}$ and $W\cdot \mathtt{chars_W[i]}$. Similarly,  completeness of the procedure follows from the completeness of the enumeration algorithm. Let $LCP[i]$ be of node-type. Consider the prefix $Wb$ of length $LCP[i]+1$ of the $i$-th suffix in the lexicographic ordering of all strings' suffixes. Since $LCP[i] = |W|$, the $(i-1)$-th suffix is of the form $Wa$, with $b\neq a$, and $W$ is right-maximal. But then, at some point our enumeration algorithm will visit the representation of $W$, with $|W|=\ell$. Since $i$ is the first position of the range of $Wb$, we have that $i= \mathtt{first_W[j]}$ for some $j \geq 2$, and Line \ref{LCP in Node} correctly outputs the LCP pair $(first_W[j], |W|) = (i,|W|)$.
	
	\emph{Correctness and completeness - Algorithm \ref{alg:fill leaves}}. Proving correctness and completeness of this procedure is much easier. It is sufficient to note that the \texttt{while} loop iterates over all ranges $\langle L,R \rangle$ of strings ending with $\#$ and not containing $\#$ anywhere else (note that we start from the range of $\#$ and we proceed by recursively left-extending this range with symbols different than $\#$). Then, for each such range we conclude that $LCP[L+1,R]$ is equal to $\ell$, i.e. the string depth of the corresponding string (excluding the final character $\#$). By their definition, all leaf-type LCP values are correctly computed in this way.
	
	\emph{Complexity - Algorithm \ref{alg:fill nodes}}.  If $\sigma > \sqrt n/\log^2 n$, then we run Beller et al's algorithm, which terminates in $O(n\log\sigma)$ time and uses $O(n) = o(n\log\sigma)$ bits of additional working space. Otherwise, we perform a linear number of operations on the stack since, as observed in Section \ref{sec:belazzougui}, the number of Weiner links is linear. By the same analysis of Section \ref{sec:belazzougui}, the operation in Line \ref{range distinct2} takes $O(k\log\sigma)$ amortized time on wavelet trees, and sorting in Line \ref{sort2} (using any comparison-sorting algorithm sorting $m$ integers in $O(m\log m)$ time) takes $O(k\log\sigma)$ time. Note that in this sorting step we can afford storing in temporary space nodes $x_1, \dots, x_k$ since this takes additional space $O(k\sigma\log n) = O(\sigma^2\log n) = O(n/\log^3n) = o(n)$ bits. 
	All these operations sum up to $O(n\log\sigma)$ time. Since the stack always takes at most $O(\sigma^2\log^2n)$ bits and $\sigma \leq \sqrt n/\log^2 n$, the stack's size never exceeds $O(n/\log^2n) = o(n)$ bits.
	
	\emph{Complexity - Algorithm \ref{alg:fill leaves}}. Note that, in the \texttt{while} loop, we start from the interval of $\#$ and recursively left-extend with characters different than $\#$ until this is possible. It follows that we visit the intervals of all strings of the form $W\#$ such that $\#$ does not appear inside $W$. Since these intervals form a cover of $[1,n]$, their number (and therefore the number of iterations in the \texttt{while} loop) is also bounded by $n$. This is also the maximum number of operations performed on the queue/stack. Using Beller et al.'s implementation for the queue and a simple vector for the stack, each operation takes constant amortized time. Operating on the stack/queue takes therefore overall $O(n)$ time.
	For each interval $\langle L,R \rangle$ popped from the queue/stack, in Line \ref{LCP in Leaves} we output $R-L-2$ LCP values. As observed above, these intervals form a cover of $[1,n]$ and therefore Line \ref{LCP in Leaves} is executed no more than $n$ times. Line \ref{getIntervals1} takes time $O(k\log\sigma)$. Finally, in Line \ref{sort1} we sort at most $\sigma$ intervals. Using any fast comparison-based sorting algorithm, this costs overall at most $O(n\log\sigma)$ time. 
	
	As far as the space usage of Algorithm \ref{alg:fill leaves} is concerned, note that we always push just pairs interval/length ($O(\log n)$ bits) in the queue/stack. If $\sigma > n/\log^3n$, we use Beller et al.'s queue, taking at most $O(n) = o(n\log\sigma)$ bits of space. Otherwise, the stack's size never exceeds $O(\sigma\cdot \log n)$ elements, with each element taking $O(\log n)$ bits. This amounts to $O(\sigma\cdot \log^2 n) = O(n/\log n) = o(n)$ bits of space usage. Moreover, in Lines \ref{getIntervals1}-\ref{sort1} it holds $\sigma\leq n/\log^3n$ so we can afford storing temporarily all intervals returned by $\mathtt{getIntervals}$ in $O(k\log n) = O(\sigma\log n) = O(n/\log^2n) = o(n)$ bits.
\end{proof}

Combining Lemma \ref{thm:LCP collection} and Lemma \ref{thm:BWT->WT}, we obtain: 

\begin{theorem}\label{thm:LCP collection succinct}
	Given the word-packed Burrows-Wheeler Transform of a collection $\bigS = \{T_1, \dots, T_m\}$ of total length $n$ on alphabet $[1,\sigma]$, we can build the LCP array of the collection in $O(n\log\sigma)$ time using $o(n\log\sigma)$ bits of working space on top of the BWT.
\end{theorem}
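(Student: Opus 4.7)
The plan is to combine the two lemmas in the natural way, with a little care about where the auxiliary bits live. First, I would allocate the output LCP array; in its uncompressed form it occupies $n\lceil\log n\rceil$ bits, which is at least $n$ bits. Since the theorem counts working space \emph{on top of the BWT}, the output is free for us to treat as scratch, provided we do not overwrite a cell before it has been filled.

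Next, I would invoke Lemma \ref{thm:BWT->WT} to convert the word-packed BWT in place into a wavelet matrix, in $O(n\log\sigma)$ time and $n$ additional bits. Those $n$ bits are borrowed from the end of the (still-uninitialized) LCP output buffer. After the wavelet matrix has been built, the borrowed bits are released, and no further $n$-bit scratch area is needed from the output; the remaining working space used during this step fits inside the $o(n\log\sigma)$ budget.

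With the wavelet matrix in hand, I would apply Lemma \ref{thm:LCP collection} to enumerate, in $O(n\log\sigma)$ time and $o(n\log\sigma)$ bits of extra working space, every pair $(i, LCP[i])$. Each pair is streamed directly into position $i$ of the output LCP array as soon as it is produced, which costs only $O(1)$ extra bits of overhead per pair. Summing the two phases yields the claimed $O(n\log\sigma)$ running time and $o(n\log\sigma)$ bits of working space on top of the input BWT.

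The only mildly delicate point is the coexistence of the wavelet-matrix construction and the output buffer: the $n$-bit overhead of Lemma \ref{thm:BWT->WT} must not collide with LCP values written later. This is handled by noting that the overhead is temporary (used only during the conversion) and that the wavelet matrix itself replaces the word-packed BWT in place, so by the time LCP enumeration begins and starts writing into the output, no auxiliary bits are still parked there. Beyond this accounting, the theorem is just the composition of Lemmas \ref{thm:LCP collection} and \ref{thm:BWT->WT}.
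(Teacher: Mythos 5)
Your proof is correct and follows exactly the paper's route: convert the word-packed BWT into a wavelet matrix via Lemma \ref{thm:BWT->WT}, borrowing the temporary $n$ bits from the not-yet-written output (a point the paper makes once in Section \ref{sec:notation}), then run the enumeration of Lemma \ref{thm:LCP collection} and write each pair $(i,LCP[i])$ into the output array. Nothing is missing; the space accounting for the borrowed bits is the only delicate point and you handle it as the paper does.
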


\section{Enumerating Suffix Tree Intervals}\label{sec:intervals}

In this section we show that the procedures described in Section \ref{sec:LCP} can be used to enumerate all suffix tree intervals---that is, the suffix array intervals of all right-maximal text substrings---taking as input the BWT of a text. Note that in this section we consider just simple texts rather than string collections as later we will use this procedure to build the compressed suffix tree of a text.

When $\sigma \leq \sqrt n/\log^2n$, we can directly use Belazzougui's procedure (Theorem \ref{th:Belazzougui}), which already solves the problem. For larger alphabets, we modify Beller et al's procedure (Theorem \ref{th:Beller}) to also generate suffix tree's intervals as follows. 

When $\sigma > \sqrt n/\log^2n$, we modify Beller et al.'s procedure to enumerate suffix tree intervals using $O(n) = o(n\log\sigma)$ bits of working space, as follows. 
We recall that (see Section \ref{sec:beller}), Beller et al's procedure can be conveniently described using two separate queues: $Q_{\ell}$ and $Q_{\ell+1}$. At each step, we pop from $Q_\ell$ an element $\langle\langle L,R \rangle, |W| \rangle$ with $\langle L,R \rangle = \mathtt{range(W)}$ and $|W|=\ell$ for some string $W$, left-extend the range with all $a\in \mathtt{BWT.rangeDistinct(L,R)}$, obtaining the ranges $\mathtt{range(aW)} = \langle L_a, R_a\rangle$ and, only if $U[R_a+1]=0$, set $U[R_a+1]\leftarrow 1$, output the LCP pair $(R_a+1, |W|)$, and push $\langle \langle L_a, R_a\rangle, |W|+1 \rangle$ into $Q_{\ell+1}$. Note that, since $LCP[R_a+1] = |W|$ we have that the $R_a$-th and $(R_a+1)$-th  smallest suffixes start, respectively, with $aXc$ and $aXd$ for some $c<d\in\Sigma$, where $W=Xc$. This implies that $aX$ is right-maximal. It is also clear that, from the completeness of Beller et al.'s procedure, all right-maximal text substrings are visited by the procedure, since otherwise the LCP values equal to $\ell = |aX|$ inside $\mathtt{range(aX)}$ would not be generated. It follows that, in order to generate all suffix tree intervals \emph{once}, we need two extra ingredients: (i) whenever we pop from $Q_\ell$ an element $\langle\langle L,R \rangle, |W| \rangle$ corresponding to a string $W = Xc$, we also need the range of $X$, and (ii) we need to quickly check if a given range $\mathtt{range(aX)}$ of a right-maximal substring $aX$ has already been output. Point (ii) is necessary since, using only the above procedure (augmented with point (i)), $\mathtt{range(aX)}$ will be output for each of its right-extensions (except the lexicographically largest, which does not cause the generation of an LCP pair).

Remember that, in order to keep space usage under control (i.e. $O(n)$ bits), we represent $Q_\ell$ as a standard queue of pairs $\langle \mathtt{range(W)}, |W| \rangle$ if and only if $|Q_\ell| < n/\log n$. For now, let us assume that the queue size does not exceed this quantity (the other case will be considered later). 
In this case, to implement point (i) we simply augment queue pairs as $\langle \mathtt{range(W)}, \mathtt{range(X)}, |W| \rangle$, where $W=Xc$ for some $c\in\Sigma$. When left-extending $W$ with a character $a$, we also left-extend $X$ with $a$, obtaining $\mathtt{range(aX)}$. Let $\mathtt{range(aW)} = \langle L_a, R_a\rangle$. At this point, if $U[R_a+1]=0$ we do the following:
\begin{enumerate}
    \item we set $U[R_a+1]\leftarrow 1$,
    \item we push $\langle \mathtt{range(aW)}, \mathtt{range(aX)}, |W|+1 \rangle$ in $Q_{\ell+1}$, and
    \item if $\mathtt{range(aX)}$ has not already been generated, we output $\mathtt{range(aX)}$.
\end{enumerate}
Note that steps (1) and (2) correspond to Beller et al.'s procedure. The test in step (3) (that is, point (ii) above) can be implemented as follows. Note that a suffix array range $\mathtt{range(aX)} = \langle L,R\rangle$ can be identified unambiguously by the two integers $L$ and $|aX| = \ell$. Note also that we generate suffix tree intervals in increasing order of string depth (i.e. when popping elements from $Q_\ell$, we output suffix array intervals of string depth $\ell$). It follows that we can keep a bitvector $GEN_\ell$ of length $n$ recording in $GEN_\ell[i]$ whether or not the suffix array interval of the string of length $\ell$ whose first coordinate is $i$ has already been output. Each time we change the value of a bit $GEN_\ell[i]$ from 0 to 1, we also push $i$ into a stack $SET_\ell$. 
Let us assume for now that also $SET_\ell$'s size does not exceed $n/\log n$ (later we will consider a different representation for the other case). Then, also the bit-size of $SET_\ell$ will never exceed $O(n)$ bits. After $Q_\ell$ has been emptied, for each $i\in SET_\ell$ we set $GEN_\ell[i] \leftarrow 0$. This makes all  $GEN_\ell$'s entries equal to 0, and we can thus re-use its space for $GEN_{\ell+1}$ at the next stage (i.e. when popping elements from $Q_{\ell+1}$).

Now, let us consider the case  $|Q_\ell| \geq n/\log n$. The key observation is that $Q_\ell$ exceeds this value for at most $O(\log n)$ values of $\ell$, therefore we can afford spending extra $O(n/\log n)$ time to process each of these queues. As seen in Section \ref{sec:beller}, whenever $Q_\ell$'s size exceeds $n/\log n$ (while pushing elements in it) we switch to a different queue representation using packed bitvectors. 
Point (i) can be solved by storing two additional bitvectors as follows. Suppose we are about to push the triple $\langle \mathtt{range(W)}, \mathtt{range(X)}, |W| \rangle$ in $Q_\ell$, where $W=Xc$ for some $c\in\Sigma$. The solution seen in Section  \ref{sec:beller} consisted in marking, in two packed bitvectors $\mathtt{open[1,n]}$ and $\mathtt{close[1,n]}$, the start and end points of $\mathtt{range(W)}$. Now, we just use two additional packed bitvectors $\mathtt{\overline{open}[1,n]}$ and $\mathtt{\overline{close}[1,n]}$ to also mark the start and end points of $\mathtt{range(X)}$. As seen in  Section \ref{sec:beller}, intervals are extracted from $Q_\ell$ by scanning $\mathtt{open[1,n]}$ and $\mathtt{close[1,n]}$ in $O(n/\log n + |Q_\ell|)$ time (exploiting word-parallelism). Note that $W$ is a right-extension of $X$, therefore $\mathtt{range(W)}$ is contained in $\mathtt{range(X)}$. It follows that we can scan in parallel the bitvectors $\mathtt{open[1,n]}$, $\mathtt{close[1,n]}$,  $\mathtt{\overline{open}[1,n]}$, and $\mathtt{\overline{close}[1,n]}$ and retrieve, for each $\mathtt{range(W)}$ extracted from the former two bitvectors, the (unique in the queue) interval $\mathtt{range(X)}$ enclosing $\mathtt{range(W)}$ (using the latter two bitvectors). 
More formally, whenever finding a bit set at $\mathtt{\overline{open}[i]}$, we search  $\mathtt{\overline{close}[i,n]}$ to find the next bit set. Let us call $j$ the position containing such bit set. Then, we similarly scan $\mathtt{open[i,j]}$ and $\mathtt{close[i,j]}$ to generate all intervals $\langle l,r \rangle$ enclosed by $\langle i,j \rangle$, and for each of them generate the triple $\langle \langle l,r \rangle, \langle i,j \rangle, \ell \rangle$.
Again, exploiting word-parallelism the process takes $O(n/\log n + |Q_\ell|)$ time to extract all triples $\langle \mathtt{range(W)}, \mathtt{range(X)}, |W| \rangle$ from $Q_\ell$.

A similar solution can be used to solve point (ii) for large $SET_\ell$. Whenever $SET_\ell$ exceeds size $n/\log n$, we simply empty it and just use bitvector $GEN_\ell$. This time, however, this bitvector is packed in $O(n/\log n)$ words. It can therefore be erased (i.e. setting all its entries to 0) in $O(n/\log n)$ time, and we do not need to use the stack $SET_\ell$ at all. Since (a) we insert an element in some $SET_\ell$ only when outputting a suffix tree range and (b) in total we output $O(n)$ such ranges, $SET_\ell$ can exceed size $n/\log n$ for at most $O(\log n)$ values of $\ell$. We conclude that also the cost of creating and processing all $GEN_\ell$ and $SET_\ell$ amortizes to $O(n)$.

To sum up, the overall procedure runs in $O(n\log\sigma)$ time and uses $O(n)$ bits of space. By combining it with Belazzougui's procedure as seen above (i.e. choosing the right procedure according to the alphabet's size), we obtain: 

\begin{lemma}\label{lem:ST intervals}
    Given a wavelet tree representing the Burrows-Wheeler transform of a text $T$ of length $n$ on alphabet $[1,\sigma]$, in $O(n\log\sigma)$ time and $o(n\log\sigma)$ bits of working space we can enumerate the suffix array intervals corresponding to all right maximal text's substrings. 
\end{lemma}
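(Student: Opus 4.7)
The plan is to split on alphabet size and dispatch to one of the two enumeration frameworks already developed. When $\sigma \leq \sqrt{n}/\log^2 n$, I would appeal directly to Theorem~\ref{th:Belazzougui}: its enumeration of $\mathtt{repr}(W) = \langle \mathtt{first}_W, |W|\rangle$ for every right-maximal $W$ already contains $\mathtt{first}_W[1]$ and $\mathtt{first}_W[k_W+1]$, from which $\mathtt{range}(W)$ is read off for free, and the $O(\sigma^2\log^2 n)$-bit overhead is $o(n\log\sigma)$ in this regime. So the real work is the complementary case $\sigma > \sqrt n / \log^2 n$, where I want to piggyback suffix tree intervals onto Beller et al.'s algorithm (Theorem~\ref{th:Beller}) while staying inside $O(n)=o(n\log\sigma)$ extra bits.

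The key structural observation is that every new LCP pair $(R_a+1,|W|)$ generated by Beller et al.\ certifies right-maximality of a specific string: writing $W = Xc$ and letting $\mathtt{range}(aW) = \langle L_a, R_a\rangle$, the $R_a$-th and $(R_a{+}1)$-th suffixes begin with $aXc$ and $aXd$ for some $c<d$, so $aX$ is right-maximal and $\mathtt{range}(aX)$ should be output once. To access $\mathtt{range}(aX)$ at the moment of certification I would augment each queue element with the parent interval, storing triples $\langle \mathtt{range}(W), \mathtt{range}(X), |W|\rangle$; left-extending by $a$ simply backward-searches both intervals. To suppress duplicates (one per right-extension of $aX$), I exploit the BFS property that intervals of depth $\ell$ are all generated before those of depth $\ell+1$: I keep a length-$n$ bitvector $GEN_\ell$ indexed by left endpoint, plus a companion stack $SET_\ell$ of the positions just flipped to $1$, and clear $GEN_\ell$ between layers by iterating over $SET_\ell$.

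Keeping all of this in $O(n)$ bits requires handling the layers where $|Q_\ell|$ or $|SET_\ell|$ exceeds $n/\log n$. For the queue, I would extend Beller et al.'s packed-bitvector representation by adding two more bitvectors $\overline{\mathtt{open}}, \overline{\mathtt{close}}$ marking the endpoints of the parent intervals. Since $\mathtt{range}(W) \subseteq \mathtt{range}(X)$, a single left-to-right word-parallel scan re-pairs children with parents: each $\overline{\mathtt{open}}$ bit opens a parent window, each enclosed $(\mathtt{open},\mathtt{close})$ pair yields a child, and the window closes at the next $\overline{\mathtt{close}}$. For $SET_\ell$, once its size crosses $n/\log n$ I discard it and simply zero the packed bitvector $GEN_\ell$ wholesale in $O(n/\log n)$ time. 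Since the total number of output intervals and LCP pairs is $O(n)$, an oversize event at depth $\ell$ can occur for only $O(\log n)$ values of $\ell$ per structure, so the extra work amortizes to $O(n)$, and combined with the $O(\log\sigma)$ per-character cost of wavelet-tree operations we get the claimed $O(n\log\sigma)$ time.

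The main obstacle I expect is correctness of the parent-child re-pairing after the queue representation flips from vectors-of-pairs to the four packed bitvectors: I must argue that within a single layer $\ell$ the parents $\mathtt{range}(X)$ (all strings of length $\ell-1$) pushed into the queue are pairwise disjoint and each strictly contains its children, so that the nested-interval sweep unambiguously reconstructs the original triples. This follows because intervals of strings of equal length partition their union and children correspond to distinct right-extensions of $X$, but it is the one place where the bookkeeping must be checked carefully rather than inherited from Beller et al.
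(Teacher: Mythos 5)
Your proposal is correct and follows essentially the same route as the paper's own proof: the same case split on $\sigma$, the same use of new LCP pairs to certify right-maximality of $aX$, the same augmentation of queue elements with the parent interval, the same $GEN_\ell$/$SET_\ell$ duplicate-suppression with wholesale clearing when $SET_\ell$ overflows, and the same two extra packed bitvectors $\overline{\mathtt{open}},\overline{\mathtt{close}}$ with a nested-interval sweep for the large-queue regime. The disjointness-of-parents point you flag at the end is indeed the one delicate step, and your justification (intervals of distinct equal-length strings are disjoint, children are nested right-extensions) is exactly what the paper relies on.
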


\section{Building the PLCP Bitvector}\label{sec:PLCP}

The PLCP array is defined as $PLCP[i] = LCP[ISA[i]]$, and can thus be used to retrieve LCP values as $LCP[i] = PLCP[SA[i]]$ (note that this requires accessing the suffix array). Kasai et al. showed in~\cite{10.1007/3-540-48194-X_17} that PLCP is almost increasing: $PLCP[i+1] \geq PLCP[i]-1$. This allows representing it in small space as follows. Let $\mathtt{plcp[1,2n]}$ denote the bitvector having a bit set at each position $PLCP[i]+2i$, for $i=1, \dots, n$ (and 0 in all other positions). Since $PLCP[i+1] \geq PLCP[i]-1$, the quantity $PLCP[i]+2i$ is different for each $i$. By definition, $PLCP[i]$ can be written as $j-2i$, where $j$ is the position of the $i$-th bit set in $\mathtt{plcp}$; this shows that each PLCP entry can be retrieved in constant time using the bitvector $\mathtt{plcp}$, augmented to support constant-time \emph{select} queries. 

We now show how to build the $\mathtt{plcp}$ bitvector in small space using the LCP enumeration procedure of Section \ref{sec:LCP}. Our procedure relies on the concept of \emph{irreducible LCP values}:

\begin{definition}\label{def:irreducible}
$LCP[i]$ is said to be \emph{irreducible} if and only if either $i=0$ or $BWT[i] \neq BWT[i-1]$ hold.
\end{definition}

We call \emph{reducible} a non-irreducible LCP value. 
We extend the above definition to PLCP values, saying that $PLCP[i]$ is irreducible if and only if $LCP[ISA[i]]$ is irreducible.
The following Lemma, shown in~\cite{10.1007/978-3-540-27810-8_32}, is easy to prove (see also ~\cite[Lem. 4]{10.1007/978-3-642-02441-2_17}): 

\begin{lemma}[\cite{10.1007/978-3-540-27810-8_32}, Lem. 1]\label{lem:reducible}
    If $PLCP[i]$ is reducible, then $PLCP[i] = PLCP[i - 1] - 1$.
\end{lemma}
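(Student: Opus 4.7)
The plan is to translate the condition ``$PLCP[i]$ is reducible'' into a statement about the text and two specific suffix positions, then argue both directions of the inequality separately.

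First I would unfold the definitions. Recall $BWT[j] = T[SA[j]-1]$, so $BWT[ISA[i]] = T[i-1]$. Let $j = ISA[i]$ and set $k = SA[j-1]$, the text position of the suffix immediately preceding, in lexicographic order, the suffix starting at $i$; by definition $LCP[j] = PLCP[i]$ is the length of the longest common prefix of $T[i,n]$ and $T[k,n]$. The hypothesis ``$PLCP[i]$ reducible'' is exactly $BWT[j] = BWT[j-1]$, which rewrites in text coordinates as $T[i-1] = T[k-1]$ (and we have $i,k > 1$ since otherwise we would be in the irreducible case handled by $i=0$).

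Next I would argue the key inequality $PLCP[i-1] \geq PLCP[i]+1$. Prepending the same character $T[i-1] = T[k-1]$ to two strings sharing a common prefix of length $PLCP[i]$ yields two strings sharing a common prefix of length at least $PLCP[i]+1$; hence $T[i-1,n]$ and $T[k-1,n]$ share a common prefix of length at least $PLCP[i]+1$. Moreover $T[k-1,n] < T[i-1,n]$ lexicographically, because $T[k,n] < T[i,n]$ and prepending a common character preserves the order. Therefore $T[k-1,n]$ is some predecessor of $T[i-1,n]$ in the lexicographic ordering of all suffixes, though not necessarily the immediate one; the immediate predecessor can only share an even longer common prefix with $T[i-1,n]$. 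Consequently $PLCP[i-1] = LCP[ISA[i-1]] \geq PLCP[i]+1$, i.e.\ $PLCP[i] \leq PLCP[i-1] - 1$.

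Finally, I would invoke Kasai et al.'s ``almost increasing'' property $PLCP[i] \geq PLCP[i-1] - 1$ (already stated in the excerpt just above Definition~\ref{def:irreducible}). Combining the two inequalities forces $PLCP[i] = PLCP[i-1]-1$, which is the claim. The only mild subtlety to watch for is the boundary case where $j = 1$ (so $k$ is undefined) or $i = 1$; but in those degenerate situations $PLCP[i]$ is irreducible by definition, so the hypothesis of the lemma rules them out, and the argument goes through unchanged. I do not expect any serious obstacle: the entire proof is a two-line manipulation once the reducibility condition is rephrased as $T[i-1] = T[k-1]$.
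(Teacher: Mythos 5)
Your argument is correct: rewriting reducibility as $T[i-1]=T[k-1]$ with $k=SA[ISA[i]-1]$, deducing $PLCP[i-1]\geq PLCP[i]+1$ from the fact that $T[k-1,n]$ is a lexicographic predecessor of $T[i-1,n]$ sharing a prefix of length $PLCP[i]+1$ (and that the immediate predecessor shares at least as much), and closing with Kasai et al.'s inequality is exactly the standard proof. The paper itself does not prove this lemma---it only cites it from the literature---and your derivation matches the argument given in the cited sources, including the correct observation that the boundary cases are excluded because they are irreducible by definition.
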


We also make use of the following Theorem from K{\"a}rkk{\"a}inen et al.~\cite{10.1007/978-3-642-02441-2_17}:

\begin{theorem}[\cite{10.1007/978-3-642-02441-2_17}, Thm. 1]\label{lem:sum irreducible}
    The sum of all irreducible lcp values is at most $2n\log n$.
\end{theorem}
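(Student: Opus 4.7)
The plan is to use a charging/amortization argument. For each irreducible index $i$ with $LCP[i] = \ell$, I would distribute its value as $\ell$ unit charges placed on the text positions $SA[i], SA[i]+1, \dots, SA[i]+\ell-1$. Then the total sum of irreducible LCP values equals $\sum_{j=1}^{n} c_j$, where $c_j$ counts the number of irreducible indices $i$ whose text interval $[SA[i], SA[i]+LCP[i]-1]$ covers position $j$. Reducing to a per-position estimate, the goal becomes showing $c_j = O(\log n)$ uniformly in $j$; summing over $j \in [1,n]$ will then yield the $2n\log n$ bound.

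The structural input for the per-position bound comes from Definition \ref{def:irreducible}. For each irreducible $i$ covering $j$, the substring $W_i = T[SA[i]\mathinner{.\,.} SA[i]+\ell_i-1]$ of length $\ell_i$ occurs at least at positions $SA[i]$ and $SA[i-1]$, and by irreducibility $T[SA[i]-1] = BWT[i] \neq BWT[i-1] = T[SA[i-1]-1]$, so these two occurrences are preceded by distinct characters. Each covering index thus comes equipped with the offset $d_i = j - SA[i] \in [0, \ell_i)$ and a companion text position $q_i = SA[i-1] + d_i$ at which the same character $T[j]$ appears within a different occurrence of $W_i$.

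The main obstacle, and the real content of the theorem, is to show that the number of such configurations at a fixed $j$ is $O(\log n)$. My plan is to sort the covering indices by length $\ell_i$ and argue that, because the associated substrings must admit two distinct left extensions (irreducibility), two covering indices with nearly equal lengths cannot share the same offset/companion pattern; more precisely, I would show that if $r$ irreducible intervals cover $j$ then consecutive sorted lengths must grow by a constant factor, which forces $r = O(\log n)$ since lengths lie in $[1, n]$. A clean way to execute the geometric-growth step is to bucket the covering indices by length scales $[2^k, 2^{k+1})$ and use the injectivity of $i \mapsto (SA[i], SA[i-1])$ together with the forced distinctness of preceding characters to show that each scale contributes only $O(1)$ to $c_j$.

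Once the per-position bound $c_j \leq 2\log n$ is established, the theorem follows immediately from
\[
\sum_{i:\ LCP[i] \text{ irreducible}} LCP[i] \;=\; \sum_{j=1}^{n} c_j \;\leq\; 2n\log n.
\]
I expect the geometric-growth argument in the previous paragraph to be the delicate step, since it requires exploiting the full force of irreducibility (not just that the substring repeats, but that it is \emph{left-branching}) to rule out too many nearly-equal-length covering intervals at a single text position.
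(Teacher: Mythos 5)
First, note that the paper itself does not prove this statement: it is imported verbatim as Theorem~1 of K{\"a}rkk{\"a}inen, Manzini and Puglisi~\cite{10.1007/978-3-642-02441-2_17}, so there is no in-paper proof to compare against and your attempt must stand on its own. It does not: the step you yourself flag as ``delicate'' --- the uniform per-position bound $c_j = O(\log n)$, and in particular the claim that each dyadic length scale contributes $O(1)$ to $c_j$ --- is false. Here is a counterexample. Fix $m$ and let $R\in\{0,1\}^{2m}$ have all its length-$(m/2+1)$ substrings distinct (a random $R$ works). For $d=1,\dots,m/2-1$ set $W_d = R[m-d,3m/2-d]$: these are distinct windows of the \emph{same} length $m/2+1$, all containing position $m$. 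Let $T = R\,c\,xW_1\,c\,xW_2\,c\cdots xW_{m/2-1}\,c\,\#$ over the alphabet $\#<c<0<1<x$. Each $W_d$ occurs exactly twice in $T$ (once in $R$, once planted), so its two occurrences are lexicographically adjacent suffix starts; they are preceded by distinct characters ($x$ versus a bit of $R$), so the corresponding $LCP[i_d]=m/2+1$ is irreducible; and since $c<0,1$, the occurrence inside $R$ is the lexicographically larger suffix, i.e.\ $SA[i_d]=m-d$, so your charging interval is $[m-d,\,3m/2-d]$, which covers $j=m$. Since $n=\Theta(m^2)$, position $m$ is covered by $\Theta(\sqrt n)$ irreducible intervals, all of identical length and hence all in one bucket $[2^k,2^{k+1})$. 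Irreducibility (left-branching) gives no leverage against this stacking, so the geometric-growth argument cannot be carried out.

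Your double-counting identity $\sum_{i\,\mathrm{irred}} LCP[i]=\sum_j c_j$ is correct, and the example above is still consistent with the theorem (those $\Theta(\sqrt n)$ stacked values sum to $\Theta(m^2)=\Theta(n)$), which is precisely the point: the theorem bounds only the \emph{average} coverage, while the maximum coverage of a single text position can be polynomial in $n$. Any valid proof must therefore be global. The argument of~\cite{10.1007/978-3-642-02441-2_17} charges each irreducible value $\ell=\mathrm{lcp}(T_p,T_{\Phi(p)})$ to the \emph{pairs} of positions $(p+k,\Phi(p)+k)$ for $0\le k<\ell$ and bounds the total charge using the fact that $\Phi$ is a permutation; an equivalent route is to show that for every $k$ the number of irreducible PLCP values that are at least $k$ is $O(n/k)$ and then sum a harmonic series. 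In either formulation the accounting is over pairs of occurrences, not over single text positions, and your proposal as written cannot be repaired within the per-position framework you set up.
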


Our strategy is as follows. We divide $BWT[1,n]$ in $\lceil n/B \rceil$ blocks $BWT[(i-1)\cdot B+1,i\cdot B]$, $i=1, \dots, \lceil n/B \rceil$ of size $B$ (assume for simplicity that $B$ divides $n$). For each block $i=1, \dots, \lceil n/B \rceil$, we use Lemma \ref{thm:LCP collection} to enumerate all pairs $(j,LCP[j])$. Whenever we generate a pair $(j,LCP[j])$ such that 
(i) $j$ falls in the current block's range $[(i-1)\cdot B+1,i\cdot B]$, (ii) $LCP[j]> \log^3 n$, and (iii) $LCP[j]$ is irreducible (this can be checked easily using Definition \ref{def:irreducible}), we store $(j,LCP[j])$ in a temporary array $\mathtt{LARGE\_LCP}$ (note: each such pair requires $O(\log n)$ bits to be stored). By Theorem \ref{lem:sum irreducible}, there cannot be more than $2n/\log^2 n$ irreducible LCP values being larger than $\log^3 n$, that is, $\mathtt{LARGE\_LCP}$ will never contain more than $2n/\log^2 n$ values and its bit-size will never exceed $O(n/\log n) = o(n)$ bits. 
We also mark all such relative positions $j-(i-1)\cdot B$ in a bitvector of length $B$  with rank support and radix-sort $\mathtt{LARGE\_LCP}$ in $O(B)$ time to guarantee constant-time access to $LCP[j]$ whenever conditions (i-iii) hold true for index $j$.
On the other hand, if (i) $j$ falls in the current block's range $[(i-1)\cdot B+1,i\cdot B]$, (ii) $LCP[j] \leq \log^3 n$, and (iii) $LCP[j]$ is irreducible then we can store $LCP[j]$ in another temporary vector $\mathtt{SMALL\_LCP[1,B]}$ as follows: $\mathtt{SMALL\_LCP}[j-(i-1)\cdot B] \leftarrow LCP[j]$ (at the beginning, the vector is initialized with undefined values). By condition (ii), $\mathtt{SMALL\_LCP}$ can be stored in $O(B\log\log n)$ bits. Using $\mathtt{LARGE\_LCP}$ and $\mathtt{SMALL\_LCP}$, we can access in constant time all irreducible values $LCP[j]$ whenever $j$ falls in the current block $[(i-1)\cdot B+1,i\cdot B]$. At this point, we enumerate all pairs $(i,ISA[i])$ in text order (i.e. for $i=1, \dots, n$) using the FL function on the BWT. Whenever one of those pairs $(i,ISA[i]) = (i,j)$ is such that (i) $j$ falls in the current block's range $[(i-1)\cdot B+1,i\cdot B]$ and (ii) $LCP[j]$ is irreducible, we retrieve $LCP[j]$ in constant time as seen above and we set $\mathtt{plcp[2i+LCP[j]]} \leftarrow 1$; the correctness of this assignment follows from the fact that $j=ISA[i]$, thus $LCP[j] = PLCP[i]$. Using Lemma \ref{lem:reducible}, we can moreover compute the reducible PLCP values that follow $PLCP[i]$ in text order (up to the next irreducible value), and set the corresponding bits in $\mathtt{plcp}$. After repeating the above procedure for all blocks $BWT[(i-1)\cdot B+1,i\cdot B]$, $i=1, \dots, \lceil n/B \rceil$, we terminate the computation of bitvector $\mathtt{plcp}$. For each block, we spend $O(n\log\sigma)$ time (one application of Lemma \ref{thm:LCP collection} and one BWT navigation to generate all pairs $(i,ISA[i])$).
We also spend $O(n/\log^2 n)$ time to allocate the instances of $\mathtt{LARGE\_LCP}$ across all blocks.
Overall, we spend $O((n^2/B)\log\sigma + n\log\sigma)$ time across all blocks. 
The space used is $o(n) + O(B\cdot \log\log n)$ bits on top of the BWT. By setting $B = (\epsilon\cdot n\log\sigma)/\log\log n$ we obtain our result: 

\begin{lemma}\label{lem:PLCP}
    Given a wavelet tree for the Burrows-Wheeler transform of a text $T$ of length $n$ on alphabet $[1,\sigma]$, for any parameter $0<\epsilon \leq 1$ we can build the PLCP bitvector in $O(n(\log\sigma + \epsilon^{-1}\log\log n))$ time and $\epsilon \cdot n\log\sigma + o(n)$ bits of working space on top of the input BWT and the optput. 
\end{lemma}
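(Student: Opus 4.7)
The plan is to exploit two ingredients: Lemma \ref{thm:LCP collection}, which streams all LCP pairs in $O(n\log\sigma)$ time within succinct extra space, and the sparsity of irreducible LCP values guaranteed by Theorem \ref{lem:sum irreducible} together with the one-step propagation rule of Lemma \ref{lem:reducible}. Since reducible PLCP values decrement by one from their text-predecessor, it suffices to locate and decode the irreducible PLCP entries and then fill in the runs of reducible ones; once $PLCP[i]$ is known, setting $\mathtt{plcp}[2i+PLCP[i]]\leftarrow 1$ is immediate.

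First I would partition $[1,n]$ into $\lceil n/B\rceil$ blocks of size $B$, where $B$ is tuned at the end. For each block $[(i-1)B+1,iB]$, I would run Lemma \ref{thm:LCP collection} once, discarding every streamed pair $(j,LCP[j])$ except those with $j$ in the current block range and $LCP[j]$ irreducible (checked directly on the BWT via Definition \ref{def:irreducible}). Irreducible values exceeding $\log^3 n$ are appended to an array $\mathtt{LARGE\_LCP}$ of $(j,LCP[j])$ pairs; by Theorem \ref{lem:sum irreducible} fewer than $2n/\log^2 n$ such pairs exist globally, so $\mathtt{LARGE\_LCP}$ occupies $o(n)$ bits and can be radix-sorted by $j$ in $O(B)$ extra time per block. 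Small irreducible values $LCP[j]\le \log^3 n$ are written at relative position $j-(i-1)B$ of an array $\mathtt{SMALL\_LCP}[1,B]$ whose entries are $O(\log\log n)$ bits wide; a $B$-bit flag bitvector with rank support distinguishes defined from undefined entries. Together these structures give constant-time access to every irreducible $LCP[j]$ whose $j$ lies in the current block.

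Next, still within the same block, I would enumerate $(i,ISA[i])$ for $i=1,\dots,n$ in text order by iterating the FL function starting at the position of the terminator in the BWT; this uses only $O(\log n)$ extra bits of state and $O(n\log\sigma)$ total time. Whenever $j=ISA[i]$ falls inside the current block and $LCP[j]$ is irreducible, I read the stored value, set $\mathtt{plcp}[2i+LCP[j]]\leftarrow 1$, and then, appealing to Lemma \ref{lem:reducible}, keep decrementing the running PLCP and marking the appropriate bits for $i+1,i+2,\ldots$ until the next position whose PLCP is itself irreducible (detected either at $ISA[i+1]$ through Definition \ref{def:irreducible}, or when the run crosses into a later block and is re-seeded there). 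Iterating over all blocks completes $\mathtt{plcp}$.

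For the accounting: each block costs one LCP enumeration plus one FL-based text-order scan, both $O(n\log\sigma)$, for a total of $O((n/B)\cdot n\log\sigma)$ time; the extra space on top of the BWT and the output $\mathtt{plcp}$ is $O(B\log\log n)+o(n)$ bits, dominated by $\mathtt{SMALL\_LCP}$. Choosing $B=(\epsilon\cdot n\log\sigma)/\log\log n$ gives time $O(n(\log\sigma+\epsilon^{-1}\log\log n))$ and space $\epsilon\cdot n\log\sigma+o(n)$, matching the statement. The subtlest point I expect is the treatment of reducible runs that straddle block boundaries: the propagation $PLCP[i+1]=PLCP[i]-1$ must not be abandoned at the end of a block, so I would carry the running PLCP value (an $O(\log n)$-bit register) across blocks, or equivalently observe that every bit of $\mathtt{plcp}$ is set exactly once and writes are idempotent, so it is enough to argue that every text position is covered by the irreducible seed of some block plus its reducible tail. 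A clean invariant, driven by Lemma \ref{lem:reducible}, should settle correctness without further effort.
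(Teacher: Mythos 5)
Your proposal is correct and follows essentially the same route as the paper: block decomposition of $[1,n]$, one LCP enumeration per block filtering for irreducible entries, the $\log^3 n$ threshold splitting values between a global $o(n)$-bit $\mathtt{LARGE\_LCP}$ (justified by the $2n\log n$ bound on the sum of irreducible LCPs) and a per-block $\mathtt{SMALL\_LCP}$ of $O(B\log\log n)$ bits, an FL-driven text-order scan to seed irreducible PLCP entries and propagate reducible ones via $PLCP[i]=PLCP[i-1]-1$, and the same choice $B=(\epsilon\, n\log\sigma)/\log\log n$. Your explicit handling of reducible runs that straddle block boundaries is a point the paper passes over more quickly, but it does not change the argument.
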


\section{Building the Suffix Tree Topology}\label{sec:ST topology}

In order to build the suffix tree topology we use a strategy analogous to the one proposed by Belazzougui~\cite{Belazzougui:2014:LTC:2591796.2591885}. The main observation is that, given a procedure that enumerates suffix tree intervals, for each interval $[l,r]$ we can increment a counter $\mathtt{Open[l]}$ and a counter $\mathtt{Close[r]}$, where $\mathtt{Open}$  and $\mathtt{Close}$ are integer vectors of length $n$. Then, the BPS representation of the suffix tree topology can be built by scanning left-to right the two arrays and, for each $i=1, \dots, n$, append $\mathtt{Open[i]}$ open parentheses followed by $\mathtt{Close[i]}$ close parentheses to the BPS representation. The main drawback of this solution is that it takes too much space: $2n\log n$ bits to store the two arrays. Belazzougui solves this problem by noticing that the sum of all the values in the two arrays is the length of the final BPS representation, that is, at most $4n$. This makes it possible to represent the arrays in just $O(n)$ bits of space by representing (the few) large counters in plain form and (the many) small counters using delta encoding (while still supporting updates in constant time). 

Our goal in this section is to reduce the working space from $O(n)$ to a (small) fraction of $n\log \sigma$. A first idea could be to iterate Belazzougui's strategy on chunks of the interval $[1,n]$.  Unfortunately, this does not immediately give the correct solution as a chunk could still account for up to $\Theta(n)$ parentheses, no matter what the length of the chunk is; as a result, Belazzougui's representation could still take $O(n)$ bits of space for a chunk (when using large enough chunks to keep the running time under control as seen in the previous section). We use a solution analogous to the one discussed in the previous section. This solution corresponds to the first part of Belazzougui's strategy (in particular, we will store small counters in plain form instead of using delta encoding). We divide $BWT[1,n]$ in $\lceil n/B \rceil$ blocks $BWT[(i-1)\cdot B+1,i\cdot B]$, $i=1, \dots, \lceil n/B \rceil$ of size $B$ (assume for simplicity that $B$ divides $n$). For each block $i=1, \dots, \lceil n/B \rceil$, we use Lemma \ref{lem:ST intervals} to enumerate all suffix tree intervals $[l,r]$. We keep two arrays $\mathtt{Open[1,B]}$ and $\mathtt{Close[1,B]}$ storing integers of $2\log\log n$ bits each. Whenever the beginning $l$ of a suffix tree interval $[l,r]$ falls inside the current block $[(i-1)\cdot B+1,i\cdot B]$, we increment $\mathtt{Open[l- (i-1)\cdot B]}$ (the description is analogous for index $r$ and array $\mathtt{Close}$). If $\mathtt{Open[l- (i-1)\cdot B]}$ reaches the maximum value $2^{2\log\log n-1}$, we no longer increment it. Adopting Belazzougui's terminology, we call such a bucket ``saturated''. 
After having generated all suffix tree intervals, 
let $k$ be the number of saturated counters. 
We allocate a vector $\mathtt{LARGE\_COUNTERS}$ storing $k$ integers of $\log n + 2$ bits each (enough to store the value $4n$, i.e. an upper-bound to the value that a counter can reach). We also allocate a bitvector of length $B$ marking saturated counters, and process it to support constant-time rank queries. This allows us to obtain in constant time the location in $\mathtt{LARGE\_COUNTERS}$ corresponding to any saturated counter in the block. We generate all suffix tree intervals for a second time using again Lemma \ref{lem:ST intervals}, this time incrementing (in $\mathtt{LARGE\_COUNTERS}$) only locations corresponding to saturated counters. Since the BPS sequence has length at most $4n$ and a counter saturates when it reaches value $\Theta(\log^2 n)$, we have that $k = O(n/\log^2n)$ and thus $\mathtt{LARGE\_COUNTERS}$ takes at most $O(n/\log n) = o(n)$ bits to be stored.
The rest of the analysis is identical to the algorithm described in the previous section. For each block, we spend $O(n\log\sigma)$ time (two applications of Lemma \ref{lem:ST intervals}). We also spend $O(n/\log^2 n)$ time to allocate the instances of $\mathtt{LARGE\_COUNTERS}$ across all blocks. 
Overall, we spend $O((n^2/B)\log\sigma + n\log\sigma)$ time across all blocks. 
The space used is $o(n) + O(B\cdot \log\log n)$ bits on top of the BWT. By setting $B = (\epsilon\cdot n\log\sigma)/\log\log n$ we obtain: 

\begin{lemma}\label{lem:BPS}
    Given a wavelet tree for the Burrows-Wheeler transform of a text $T$ of length $n$ on alphabet $[1,\sigma]$, for any parameter $0<\epsilon \leq 1$ we can build the BPS representation of the suffix tree topology in $O(n(\log\sigma + \epsilon^{-1}\log\log n))$ time and $\epsilon \cdot n\log\sigma + o(n)$ bits of working space on top of the input BWT and the optput. 
\end{lemma}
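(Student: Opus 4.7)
The plan is to follow Belazzougui's counter-based strategy but applied block by block, so that the working space depends on the block size $B$ rather than on $n$. First I would recall that the BPS representation of the suffix tree topology can be produced from two counter arrays $\mathtt{Open}[1,n]$ and $\mathtt{Close}[1,n]$, where $\mathtt{Open}[l]$ (resp.\ $\mathtt{Close}[r]$) counts the number of suffix tree intervals starting at $l$ (resp.\ ending at $r$); the BPS is then obtained by a single left-to-right scan that, at position $i$, emits $\mathtt{Open}[i]$ open parentheses followed by $\mathtt{Close}[i]$ close parentheses. Since the suffix tree has at most $2n$ nodes, the sum of all entries in $\mathtt{Open}$ and $\mathtt{Close}$ is at most $4n$.

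The main algorithmic step is to partition the interval $[1,n]$ into $\lceil n/B \rceil$ blocks of length $B$ and process one block at a time. For the current block, I would allocate two local arrays $\mathtt{Open}[1,B]$ and $\mathtt{Close}[1,B]$ of $2\log\log n$ bits per entry, and run the enumeration of Lemma \ref{lem:ST intervals} on the BWT, incrementing the counter in the appropriate local array whenever an interval endpoint falls inside the block (and freezing a counter as soon as it reaches the maximum representable value $2^{2\log\log n - 1}$, declaring it \emph{saturated}). After this first pass, I mark the saturated positions in a bitvector of length $B$ equipped with constant-time rank, count their number $k$, and allocate an auxiliary array $\mathtt{LARGE\_COUNTERS}$ of $k$ entries of $\lceil\log n\rceil+2$ bits each. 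A second application of Lemma \ref{lem:ST intervals} then re-enumerates all suffix tree intervals and increments, for each endpoint falling in the current block at a saturated position, the corresponding entry of $\mathtt{LARGE\_COUNTERS}$ (located via rank). Finally, I emit the BPS contribution of the block by scanning the two local arrays, using $\mathtt{LARGE\_COUNTERS}$ whenever the saturation bit is set, and append it to the output.

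The key step to verify is that $\mathtt{LARGE\_COUNTERS}$ fits in $o(n)$ bits: since a counter saturates only after reaching $\Theta(\log^2 n)$ and the total mass of $\mathtt{Open}+\mathtt{Close}$ is at most $4n$, the number $k$ of saturated counters per block is $O(n/\log^2 n)$, so $\mathtt{LARGE\_COUNTERS}$ uses $O(n/\log n)=o(n)$ bits; the saturation bitvector uses $O(B)$ bits, which is absorbed by the $\mathtt{Open}/\mathtt{Close}$ budget. The local arrays cost $O(B\log\log n)$ bits, and the working space of Lemma \ref{lem:ST intervals} is $o(n\log\sigma)$. For the time, each block triggers two calls to Lemma \ref{lem:ST intervals}, each costing $O(n\log\sigma)$, so the grand total over all blocks is $O((n^2/B)\log\sigma)$ plus a global $O(n\log\sigma)$ overhead for allocation and output assembly.

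Setting $B = (\epsilon\cdot n\log\sigma)/\log\log n$ yields a total time of $O(n\log\sigma + \epsilon^{-1}\cdot n\log\log n) = O(n(\log\sigma + \epsilon^{-1}\log\log n))$ and a working space of $O(B\log\log n) + o(n) = \epsilon\cdot n\log\sigma + o(n)$ bits on top of the input BWT and the output, as claimed. The main obstacle I expect is the accounting that bounds $k$: it is crucial to bound the number of saturated counters using the global invariant $\sum_l \mathtt{Open}[l] + \sum_r \mathtt{Close}[r] \leq 4n$ and not merely a per-block bound, since a single block could in principle absorb many saturated counters; once this is established, both passes are correct because the first pass collects all contributions of non-saturated positions exactly, while the second pass recomputes, in wider integers, only the values that the first pass had to cap.
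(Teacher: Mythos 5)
Your proposal is correct and follows essentially the same route as the paper: block decomposition of $[1,n]$, local $\mathtt{Open}/\mathtt{Close}$ counters of $2\log\log n$ bits with saturation, a rank-supported bitvector locating the $O(n/\log^2 n)$ saturated counters in $\mathtt{LARGE\_COUNTERS}$, two passes of the interval-enumeration lemma per block, and the choice $B = (\epsilon\cdot n\log\sigma)/\log\log n$. The only addition you make is the explicit remark that the bound on $k$ must come from the global mass invariant $\sum\mathtt{Open}+\sum\mathtt{Close}\leq 4n$ rather than a per-block count, which is a correct and worthwhile clarification but not a different argument.
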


To conclude, we note that our procedures can be immediately used to build space-efficiently the compressed suffix tree described by Sadakane~\cite{Sadakane:2007:CST:1326296.1326297} starting from the BWT.
The only missing ingredients are (i) to augment the BWT with a suffix array sample in order to turn it into a CSA, and (ii) to pre-process the PLCP and BPS sequences to support fast queries (\emph{select} on the PLCP and navigational queries on the BPS). Step (i) can be easily performed in $O(n\log\sigma)$ time and $n+o(n)$ bits of working space with a folklore solution that iteratively applies function LF to navigate all BWT's positions and collect one suffix array sample every $O(\log^{1+\delta} n/\log\sigma)$ text positions, for any fixed $\delta>0$ (using a succinct bitvector to mark sampled positions). The resulting CSA takes $n\log\sigma + o(n\log\sigma)$ bits of space and allows computing any $SA[i]$ in $O(\log^{1+\delta} n)$ time.
Step (ii) can be performed in $O(n)$ time and $o(n)$ bits of working space using textbook solutions (see \cite{Navarro:2016:CDS:3092586}).
Combining this with Lemmas \ref{thm:BWT->WT}, \ref{lem:PLCP}, and \ref{lem:BPS}, we obtain: 

\begin{theorem}\label{lem:BPS}
    Given the word-packed BWT of a text $T$ of length $n$ on alphabet $[1,\sigma]$, for any parameter $0<\epsilon \leq 1$ we can replace it
    in $O(n(\log\sigma + \epsilon^{-1}\log\log n))$ time and $\epsilon \cdot n\log\sigma + o(n)$ bits of working space
    with a compressed suffix tree taking $n\log\sigma + 6n + o(n\log\sigma)$ bits of space and supporting all operations in $O(\tt{polylog}\ n)$ time.  
\end{theorem}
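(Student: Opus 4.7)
The plan is to prove this by composing the lemmas already established, adding only two routine post-processing steps. I would first invoke Lemma \ref{thm:BWT->WT} on the word-packed BWT to replace it with a wavelet matrix in $O(n\log\sigma)$ time using $n$ additional bits; these bits can be borrowed from the space that will eventually hold the output, as noted after Lemma \ref{thm:BWT->WT}. From this wavelet-matrix BWT I would then run the PLCP construction of Lemma \ref{lem:PLCP} to produce the $2n$-bit PLCP bitvector, and subsequently the topology construction of Lemma \ref{lem:BPS} to produce the $4n$-bit BPS sequence. Each of these runs in $O(n(\log\sigma + \epsilon^{-1}\log\log n))$ time using $\epsilon\cdot n\log\sigma + o(n)$ bits of working space on top of the BWT and its own output, and since they are executed sequentially the $\epsilon\cdot n\log\sigma$ budget is charged only once.

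To lift the BWT to a CSA I would use the folklore sampling scheme: iterate LF to scan the text in reverse order, marking one sampled position every $\Theta(\log^{1+\delta} n/\log\sigma)$ text positions in a succinct bitvector over BWT positions and storing the associated SA values in a packed array. This scan takes $O(n\log\sigma)$ time and $n+o(n)$ bits of extra working space, contributes $o(n\log\sigma)$ bits to the final structure, and supports $SA[i]$ retrieval in $O(\log^{1+\delta} n)$ time. Finally, I would add constant-time \sel{} support to the PLCP bitvector so that $PLCP[i]=\sel(i)-2i$, and consequently $LCP[i]=PLCP[SA[i]]$, is recovered in CSA-access time; and install the standard $o(n)$-bit index on the BPS that supports all suffix-tree navigational primitives in $O(1)$ time. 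Both preprocessings use $O(n)$ time and $o(n)$ extra bits via textbook constructions such as those in~\cite{Navarro:2016:CDS:3092586}. Summing the components yields $n\log\sigma + o(n\log\sigma)$ bits for the wavelet-matrix CSA (including its sampling), $2n + o(n)$ bits for the PLCP, and $4n + o(n)$ bits for the BPS, for a total of $n\log\sigma + 6n + o(n\log\sigma)$ bits.

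The main obstacle is a bookkeeping one rather than an algorithmic one: ensuring that the four construction phases are scheduled so that their working-space peaks do not accumulate. Each intermediate structure (the sampling bitvector, the temporary $\mathtt{LARGE\_LCP}$, $\mathtt{SMALL\_LCP}$ and $\mathtt{LARGE\_COUNTERS}$ arrays, the $\mathtt{GEN}_\ell$/$\mathtt{SET}_\ell$ pair, and Beller et al.'s packed queues) must be deallocated before the next phase begins, and the $n$ extra bits needed by Lemma \ref{thm:BWT->WT} must be placed inside the space eventually occupied by the output rather than on top of it. Once this accounting is spelled out, the running-time, working-space, and query-time bounds follow directly from the invoked lemmas, completing the proof.
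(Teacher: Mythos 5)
Your proposal is correct and follows essentially the same route as the paper: convert the word-packed BWT to a wavelet matrix via Lemma \ref{thm:BWT->WT} (re-using output space for the extra $n$ bits), apply Lemmas \ref{lem:PLCP} and \ref{lem:BPS} sequentially, add the folklore LF-based suffix array sampling to obtain a CSA, and finish with textbook $O(n)$-time, $o(n)$-bit preprocessing for \sel on the PLCP and navigation on the BPS. The space accounting you give ($n\log\sigma + o(n\log\sigma)$ for the CSA, $2n+o(n)$ for the PLCP, $4n+o(n)$ for the BPS) matches the paper's, so nothing further is needed.
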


\section{Merging BWTs in Small Space}\label{sec:algo2}

In this section we use our space-efficient BWT-navigation strategies to tackle an additional problem: to merge the BWTs of two string collections. 
In~\cite{Belazzougui:2014:LTC:2591796.2591885,belazzougui2016linear}, Belazzougui et al. show that Theorem \ref{th:Belazzougui} can be adapted to merge the BWTs of two texts $T_1, T_2$ and obtain the BWT of the collection $\{T_1, T_2\}$ in $O(nk)$ time and $n\log\sigma(1+1/k) + 11n + o(n)$ bits of working space for any $k \geq 1$~\cite[Thm. 7]{belazzougui2016linear}. We show that our strategy enables a more space-efficient algorithm for the task of merging BWTs of collections.
%and, at the same time, computing the LCP array of the merged collection. 
The following theorem, whose proof is reported later in this section, merges two BWTs by computing the binary DA of their union. After that, the merged BWT can be streamed to external memory (the DA tells how to interleave characters from the input BWTs) and does not take additional space in internal memory. Similarly to what we did in the proof of Theorem \ref{thm:LCP collection succinct}, this time we re-use the space of the Document Array to accommodate the extra $n$ bits needed to replace the BWTs of the two collections with their wavelet matrices. This is the main result of this section: 

\begin{theorem}\label{th:merge}
	Given the Burrows-Wheeler Transforms of two collections $\bigS_1$ and $\bigS_2$ of total length $n$ on alphabet $[1,\sigma]$, we can compute the Document Array 
	%and (optionally) the LCP array 
	of $\bigS_1 \cup \bigS_2$ in $O(n\log\sigma)$ time using $o(n\log\sigma)$ bits of working space on top of the input BWTs and the output DA.
\end{theorem}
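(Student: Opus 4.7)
The plan is to enumerate the nodes of the generalized suffix tree of $\bigS_1 \cup \bigS_2$ by running a parallel version of either Beller et al.'s or Belazzougui's procedure on the two input BWTs, and to fill in the Document Array directly during this enumeration.

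The central observation is that, for any substring $W$ of the union whose BWT interval on $\bigS_j$ is $\langle L_j, R_j \rangle$ (possibly empty) for $j=1,2$, the interval of $W$ in the merged suffix array is exactly $\langle L, R \rangle = \langle L_1 + L_2 - 1,\ R_1 + R_2 \rangle$, because the number of suffixes of $\bigS_1 \cup \bigS_2$ strictly smaller than $W$ is $(L_1-1)+(L_2-1)$ (our tie-breaking convention places $\bigS_1$'s suffixes before $\bigS_2$'s in case of ties). As a consequence, whenever exactly one of the two input intervals is empty, the corresponding merged range consists entirely of suffixes from a single collection and the $DA$ entries in that range can be set in bulk without any further recursion.

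This leads to the following algorithm. Select the enumeration scheme by the same dichotomy used in Lemma \ref{thm:LCP collection}: for $\sigma < \sqrt n / \log^2 n$ use a parallel Belazzougui-style stack enumeration (whose stack fits in $O(\sigma^2 \log^2 n) = o(n)$ bits); for larger $\sigma$ use a parallel Beller-et-al.-style queue enumeration (whose queue fits in $O(n) = o(n\log\sigma)$ bits, representing both pairs of interval endpoints via packed bitvectors as in Section \ref{sec:beller}). Push the initial pair $\langle \langle 1,n_1\rangle, \langle 1,n_2\rangle\rangle$ associated to the empty string. At each iteration, pop a pair $\langle \langle L_1,R_1\rangle,\langle L_2,R_2\rangle\rangle$ associated to some substring $W$ and use the parallel $\mathtt{getIntervals}$ function of Section \ref{sec:notation} to enumerate the left-extensions $cW$ (with $c\neq \#$) together with their pairs $\langle \langle L_1',R_1'\rangle,\langle L_2',R_2'\rangle\rangle$. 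For each child, if $\langle L_1',R_1'\rangle$ is empty then write a block of $1$s into $DA$ at positions $L_1'+L_2'-1,\dots, R_1'+R_2'$ and discard the child; if $\langle L_2',R_2'\rangle$ is empty write a block of $0$s analogously; otherwise push the child onto the stack or queue. Suffixes that remain identical up to the terminator in both collections (so that both intervals stay jointly non-empty after all non-$\#$ left-extensions are exhausted) are handled by directly applying the tie-breaking rule, writing a block of $\bigS_1$-positions followed by a block of $\bigS_2$-positions into the merged range.

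Correctness will follow by induction on $|W|$, analogous to the completeness proof in Lemma \ref{lemma:proof of thm1}: every SA position of the merged collection is covered by exactly one pruned interval pair (or by a terminator-only leaf), at which point its $DA$ value is set once and correctly. The running time is $O(n\log\sigma)$ because the number of non-pruned interval pairs pushed is bounded by the number of internal nodes of the generalized suffix tree of $\bigS_1 \cup \bigS_2$, which is $O(n)$; each call to the parallel $\mathtt{getIntervals}$ costs $O(\log\sigma)$ per returned child on wavelet trees; and the total length of all pruned merged ranges is exactly $n$, so bulk $DA$ updates cost $O(n)$ overall. The working space is $o(n\log\sigma)$ bits by the same analysis as in Lemma \ref{thm:LCP collection}, and the $n$ extra bits needed by Lemma \ref{thm:BWT->WT} to convert each input BWT into a wavelet matrix can be recycled from the output $DA$ space, processed in blocks as in Theorem \ref{thm:LCP collection succinct}. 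The main obstacle will be the amortized linear bound on the number of visited interval pairs: unlike the single-BWT setting, we must argue that coupling the two enumerations still produces only $O(n)$ distinct pairs and that the packed-bitvector queue of Section \ref{sec:beller} still admits linear-time extraction when each queue element carries two intervals instead of one. A secondary point is ensuring that tie-breaking at terminators and at repeated identical suffixes is applied uniformly so that every $DA$ position is written exactly once.
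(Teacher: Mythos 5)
Your merged-interval formula $\langle L_1+L_2-1,\ R_1+R_2\rangle$, the empty-interval convention, and the stack/queue dichotomy all match the paper, but the complexity argument has a genuine gap --- and it sits exactly where you flagged the ``main obstacle.'' You start from the empty string and recurse on every left-extension $cW$ whose range is non-empty in \emph{both} collections, pruning only when one side becomes empty. The strings you visit are therefore arbitrary $\#$-free substrings occurring in both collections (all of whose proper suffixes also do); nothing forces them to be right-maximal, so the bound ``number of internal nodes of the generalized suffix tree $=O(n)$'' does not apply to them. In the worst case, e.g.\ $\bigS_1=\bigS_2=\{T\#\}$ where $T$ has $\Theta(n^2)$ distinct factors, no branch is ever pruned and you visit $\Theta(n^2)$ interval pairs, giving quadratic time. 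You cannot repair this by restricting to right-maximal strings or by importing Beller et al.'s $U$-bitvector pruning, because then the pruned and terminal intervals would no longer cover every $DA$ position exactly once; and navigating only the \emph{impure} suffix-tree nodes (as Belazzougui et al.\ do) forces the heavy $\mathtt{repr}$ representation via Weiner links, which is precisely what the lightweight interval-pair queue cannot afford on large alphabets.

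The paper sidesteps this by navigating \emph{leaves} instead of internal nodes: it starts from the pair of ranges of the string ``$\#$'' and left-extends only with characters different from $\#$, so every visited pair is the range of a string of the form $W\#$, i.e.\ of a distinct suffix of the collection. These ranges are non-empty and pairwise disjoint (a suffix $V\#$ is prefixed by $W\#$ only if $V=W$), so there are at most $n$ of them and the linear bound on queue/stack operations is immediate; within each such range the $DA$ is filled by the tie-breaking rule ($R_1-L_1+1$ zeros followed by $R_2-L_2+1$ ones), which also subsumes your separate ``identical suffixes up to the terminator'' case --- a case your scheme must otherwise handle at \emph{every} visited node, not only when the non-$\#$ extensions are exhausted. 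Your remaining ingredients (the $o(n\log\sigma)$ space analysis, and recycling the $DA$'s space to apply Lemma \ref{thm:BWT->WT}) coincide with the paper's; for the large-$\sigma$ queue carrying pairs of intervals, the paper adds two $\mathtt{NonEmpty}$ bitvectors and per-collection $\mathtt{Open}/\mathtt{Close}$ bitvectors so that corresponding intervals can still be extracted by a parallel linear scan.
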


%Our working space is also lower than the worst-case of $O(n\log n)$ bits of \eGap by Egidi and Manzini~\cite{egidi2017lightweight} (which however can be reduced at expenses of running time).	
%An extension of Theorem \ref{th:merge} to $m\geq 2$ collections is immediate but omitted here for space constraints. 
We also briefly discuss how to extend Theorem \ref{th:merge} to build the LCP array of the merged collection.
In Section \ref{sec:experiments} we present an  implementation of our algorithms and an experimental comparison  with \eGap~\cite{egidi2017lightweight}, the state-of-the-art tool designed for the same task of merging BWTs while inducing the LCP of their union.

The procedure of Algorithm \ref{alg:fill leaves} can be extended to merge BWTs of two collections $\bigS_1$, $\bigS_2$ using $o(n\log\sigma)$ bits of working space on top of the input BWTs and output Document Array (here, $n$ is the cumulative length of the two BWTs). The idea is to simulate a navigation of the \emph{leaves} of the generalized suffix tree of $\bigS_1 \cup \bigS_2$ (note: for us, a collection is an ordered multi-set of strings). 
Our procedure differs from that described in~\cite[Thm. 7]{belazzougui2016linear} in two ways. First, they navigate a subset of the suffix tree \emph{nodes} (so-called \emph{impure} nodes, i.e. the roots of subtrees containing suffixes from distinct strings), whereas we navigate leaves. Second, their visit is implemented by following Weiner links. This forces them to represent the nodes with the ``heavy'' representation $\mathtt{repr}$ of Section \ref{sec:belazzougui}, which is not efficient on large alphabets. On the contrary, leaves can be represented simply as ranges and allow for a more space-efficient queue/stack representation. 

We represent each leaf by a pair of intervals, respectively on $BWT(\bigS_1)$ and $BWT(\bigS_2)$, of strings of the form $W\#$. Note that: (i) the suffix array of $\bigS_1 \cup \bigS_2$ is covered by the non-overlapping intervals of strings of the form $W\#$, and (ii) for each such string $W\#$, the interval $\mathtt{range(W\#)} = \langle L,R \rangle$ in $GSA(\bigS_1 \cup \bigS_2)$ can be partitioned as $\langle L, M \rangle \cdot \langle M+1, R\rangle$, where $\langle L,M\rangle$ contains only suffixes from  $\bigS_1$ and $\langle M+1,R \rangle$ contains only suffixes from  $\bigS_2$ (one of these two intervals could be empty).	
It follows that we can navigate in parallel the leaves of the suffix trees of $\bigS_1$ and $\bigS_2$ (using again a stack or a queue containing pairs of intervals on the two BWTs), and fill the Document Array $DA[1,n]$, an array that will tell us whether the $i$-th entry of $BWT(\bigS_1 \cup \bigS_2)$ comes from $BWT(\bigS_1)$ ($DA[i] = 0$) or $BWT(\bigS_2)$ ($DA[i] = 1$). To do this, let $\langle L_1, R_1\rangle$ and $\langle L_2, R_2\rangle$ be the ranges on the suffix arrays of $\bigS_1$ and $\bigS_2$, respectively, of a  suffix $W\#$ of some string in the collections. 
Note that one of the two intervals could be empty: $R_j<L_j$. In this case, we still require that $L_j-1$ is the number of suffixes in $\bigS_j$ that are smaller than $W\#$.
Then, in the collection $\bigS_1 \cup \bigS_2$ there are $L_1 + L_2 - 2$ suffixes smaller than $W\#$, and $R_1 + R_2$ suffixes smaller than or equal to $W\#$. It follows that the range of $W\#$ in the suffix array of $\bigS_1 \cup \bigS_2$ is $\langle L_1+L_2-1, R_1+R_2\rangle$, where the first $R_1-L_1+1$ entries correspond to suffixes of strings from $\bigS_1$. Then, we set $DA[L_1+L_2-1, L_2 + R_1-1] \leftarrow 0$ and $DA[L_2 + R_1,R_1+R_2] \leftarrow 1$. 
The procedure starts from the pair of intervals corresponding to the ranges of the string ``$\#$'' in the two BWTs, and proceeds recursively by left-extending the current pair of ranges $\langle L_1, R_1\rangle$, $\langle L_2, R_2\rangle$ with the symbols in $\mathtt{BWT_1.rangeDistinct(L_1,R_1)} \cup \mathtt{BWT_2.rangeDistinct(L_2,R_2)}$.
The detailed procedure is reported in Algorithm \ref{alg:merge}. The leaf visit is implemented, again, using a stack or a queue; this time however, these containers are filled with pairs of intervals $\langle L_1, R_1\rangle$, $\langle L_2, R_2\rangle$. 
We implement the stack simply as a vector of quadruples $\langle L_1, R_1, L_2, R_2\rangle$. As far as the queue is concerned, some care needs to be taken when representing the pairs of ranges using bitvectors as seen in Section \ref{sec:beller} with Beller et al.'s representation. 
Recall that, at any time, the queue can be partitioned in two sub-sequences associated with LCP values $\ell$ and $\ell+1$ (we pop from the former, and push in the latter).
This time, we represent each of these two subsequences as a vector of quadruples (pairs of ranges on the two BWTs) as long as the number of quadruples in the sequence does not exceed $n/\log n$. When there are more quadruples than this threshold, we switch to a bitvector representation defined as follows. 
Let $|BWT(\bigS_1)|=n_1$, $|BWT(\bigS_2)|=n_2$, and $|BWT(\bigS_1\cup \bigS_2)| = n = n_1+n_2$.
We keep two bitvectors $\mathtt{Open[1,n]}$ and $\mathtt{Close[1,n]}$ storing opening and closing parentheses of intervals in $BWT(\bigS_1\cup \bigS_2)$. We moreover keep two bitvectors $\mathtt{NonEmpty_1[1,n]}$ and $\mathtt{NonEmpty_2[1,n]}$ keeping track, for each $i$ such that $\mathtt{Open[i]=1}$, of whether the interval starting in $BWT(\bigS_1\cup \bigS_2)[i]$ contains suffixes of reads coming from $\bigS_1$ and $\bigS_2$, respectively. Finally, we keep four bitvectors $\mathtt{Open_j[1,n_j]}$ and $\mathtt{Close_j[1,n_j]}$, for $j=1,2$, storing non-empty intervals on $BWT(\bigS_1)$ and $BWT(\bigS_2)$, respectively. To insert a pair of intervals $\langle L_1, R_1\rangle,\ \langle L_2, R_2\rangle$ in the queue, let $\langle L,R \rangle = \langle L_1+L_2-1, R_1+R_2\rangle$. We set $\mathtt{Open[L]} \leftarrow 1$ and  $\mathtt{Close[R]} \leftarrow 1$. Then, for $j=1,2$, we set $\mathtt{NonEmpty_j[L]} \leftarrow 1$, $\mathtt{Open_j[L_j]} \leftarrow 1$ and $\mathtt{Close_j[R_j]} \leftarrow 1$ if and only if $R_j\geq L_j$. 
This queue representation takes $O(n)$ bits. 
By construction, for each bit set in $\mathtt{Open}$ at position $i$, there is a corresponding bit set in 
$\mathtt{Open_j}$ if and only if $\mathtt{NonEmpty_j[i]} = 1$ (moreover, corresponding bits set appear in the same order in $\mathtt{Open}$ and
$\mathtt{Open_j}$). It follows that a left-to-right scan of these bitvectors is sufficient to identify corresponding intervals on $BWT(\bigS_1\cup \bigS_2)$, $BWT(\bigS_1)$, and $BWT(\bigS_2)$.
By packing the bits of the bitvectors in words of $\Theta(\log n)$ bits, the $t$ pairs of intervals contained in the queue can be extracted in $O(t+ n/\log n)$ time (as described in~\cite{beller2013computing}) by scanning in parallel the bitvectors forming the queue. Particular care needs to be taken only when we find the beginning of an interval $\mathtt{Open[L]=1}$ with $\mathtt{NonEmpty_1[L]} = 0$ (the case $\mathtt{NonEmpty_2[L]} = 0$ is symmetric). Let $L_2$ be the beginning of the corresponding non-empty interval on $BWT(\bigS_2)$. Even though we are not storing $L_1$ (because we only store nonempty intervals), we can retrieve this value as $L_1=L-L_2+1$. Then, the empty interval on $BWT(\bigS_1)$ is $\langle L_1, L_1-1\rangle$.

The same arguments used in the previous section show that the algorithm runs in $O(n\log\sigma)$ time and uses $o(n\log\sigma)$ bits of space on top of the input BWTs and output Document Array. This proves Theorem \ref{th:merge}.
%At the end of execution, the merged BWT can be directly streamed to output by scanning in parallel the Document Array and the input BWTs. 
To conclude, we note that the algorithm can be easily extended to compute the LCP array of the merged collection while merging the BWTs. This requires adapting Algorithm \ref{alg:fill nodes} to work on pairs of suffix tree nodes (as we did in Algorithm \ref{alg:merge} with pairs of leaves). Results on an implementation of the extended algorithm are discussed in the next section. 
From the practical point of view, note that it is more advantageous to induce the LCP of the merged collection while merging the BWTs (rather than first merging and then inducing the LCP using the algorithm of the previous section), since leaf-type LCP values can be induced directly while computing the document array.

\begin{algorithm}
  \begin{algorithmic}[1]
  
   \If{$\sigma > n/\log^3n$}
			
      \State $\mathtt P \leftarrow \mathtt{new\_queue()}$\Comment{Initialize new queue of interval pairs}\label{new queue3}
   \Else
   	 \State $\mathtt P \leftarrow \mathtt{new\_stack()}$\Comment{Initialize new stack of interval pairs}\label{new stack3}
   \EndIf

   \State $\mathtt P\mathtt{.push(BWT_1.range(\#),BWT_2.range(\#))}$\Comment{Push SA-ranges of terminator}\label{push5}
   
   \While{$\mathtt{\mathbf{not}\ P.empty()}$}\label{while3}
   
   \State $\langle L_1,R_1, L_2, R_2 \rangle \leftarrow \mathtt{P.pop()}$\Comment{Pop highest-priority element}\label{pop3}

  \For{$i=L_1+L_2-1\dots L_2+R_1-1$}
	\State $\mathtt{DA}[i] \leftarrow 0$\Comment{Suffixes from $\bigS_1$}\label{DA1}
  \EndFor

  \For{$i=L_2+R_1\dots R_1+R_2$}
    \State $\mathtt{DA}[i] \leftarrow 1$\Comment{Suffixes from $\bigS_2$}\label{DA2}
 \EndFor

  \If{$\sigma > n/\log^3n$}
     \State$\mathtt{P.push(getIntervals(L_1, R_1, L_2, R_2, BWT_1, BWT_2))}$\Comment{New intervals}\label{push8}
  \Else
     \State $c_1^1, \dots, c_{k_1}^1 \leftarrow \mathtt{BWT_1.rangeDistinct(L_1,R_1)}$\label{range distinct3}
     \State $c_1^2, \dots, c_{k_2}^2 \leftarrow \mathtt{BWT_2.rangeDistinct(L_2,R_2)}$\label{range distinct4}
     \State $\{c_1\dots c_k\} \leftarrow \{c_1^1, \dots, c_{k_1}^1\} \cup \{c_1^2, \dots, c_{k_2}^2\}$\label{range distinct5}
    
    \For{$i=1\dots k$}
      \State $\langle L_1^i, R_1^i\rangle \leftarrow \mathtt{BWT_1.bwsearch(\langle L_1, R_1\rangle, c_i)}$\Comment{Backward search step}\label{BWS2}
    \EndFor

   \For{$i=1\dots k$}
     \State $\langle L_2^i, R_2^i\rangle \leftarrow \mathtt{BWT_2.bwsearch(\langle L_2, R_2\rangle, c_i)}$\Comment{Backward search step}\label{BWS3}
     
    \EndFor
     
     \State $\langle \hat L_1^i, \hat R_1^i, \hat L_2^i, \hat R_2^i, \rangle_{i=1, \dots, k} \leftarrow \mathtt{sort}(\langle L_1^i, R_1^i, L_2^i, R_2^i, \rangle_{i=1, \dots, k})$\label{sort3}

     \For{$i=k\dots 1$}

      \State $\mathtt{P.push}(\hat L_1^i, \hat R_1^i, \hat L_2^i, \hat R_2^i)$\Comment{Push in order of decreasing length}\label{push6}

   \EndFor

 \EndIf

 \EndWhile

 \caption{$\mathtt{Merge(BWT_1,BWT_2, DA)}$}\label{alg:merge}
\end{algorithmic}
\end{algorithm}

Note that Algorithm \ref{alg:merge} is similar to Algorithm \ref{alg:fill leaves}, except that now we manipulate pairs of intervals. In Line \ref{sort3}, we sort quadruples according to the length $R_1^i + R_2^i - (L_1^i + L_2^i) +2$ of the combined interval on $BWT(\bigS_1\cup \bigS_2)$. Finally, note that Backward search can be performed correctly also when the input interval is empty: $\mathtt{BWT_j.bwsearch(\langle L_j, L_j-1 \rangle, c)}$, where $L_j-1$ is the number of suffixes in $\bigS_j$ smaller than some string $W$, correctly returns the pair $\langle L', R'\rangle$ such that $L'$ is the number of suffixes in $\bigS_j$ smaller than $cW$: this is true when implementing backward search with a $rank_c$ operation on position $L_j$; then, if the original interval is empty we just set $R'=L'-1$ to keep the invariant that $R'-L'+1$ is the interval's length. %Correctness, completeness, and complexity analysis of the algorithm are completely analogous to those of Algorithm \ref{alg:fill leaves}.

\section{Implementation and Experimental Evaluation}\label{sec:experiments}

We implemented our LCP construction and BWT merge algorithms on DNA alphabet in \repo using the language C++. 
Due to the small alphabet size, it was actually sufficient to implement our extension of Belazzougui's enumeration algorithm (and not the strategy of Beller et al., which becomes competitive only on large alphabets).
The repository features a new packed string on DNA alphabet $\Sigma_{DNA}=\{A,C,G,T,\#\}$ using 4 bits per character and able to compute the quintuple $\langle BWT.rank_c(i) \rangle_{i\in \Sigma_{DNA}}$ with just one cache miss.  This is crucial for our algorithms, since at each step we need to left-extend ranges by all characters. 
This structure divides the text in blocks of 128 characters. Each block is stored using 512 cache-aligned bits (the typical size of a cache line), divided as follows. The first 128 bits store four 32-bits counters with the partial ranks of A, C, G, and T before the block (if the string is longer than $2^{32}$ characters, we further break it into superblocks of $2^{32}$ characters; on reasonably-large inputs, the extra rank table fits in cache and does not cause additional cache misses). 
The following three blocks of 128 bits store the first, second, and third bits, respectively, of the characters' binary encodings (each character is packed in 3 bits). Using this layout, the rank of each character in the block can be computed with at most three masks, a bitwise AND (actually less, since we always compute the rank of all five characters and we re-use partial results whenever possible), and a \texttt{popcount} operation. 
We also implemented a packed string on the augmented alphabet $\Sigma_{DNA}^+=\{A,C,G,N,T,\#\}$ using $4.38$ bits per character and offering the same cache-efficiency guarantees.
In this case, a 512-bits block stores 117 characters, packed as follows.
As seen above, the first 128 bits store four 32-bits counters with the partial ranks of A, C, G, and T before the block. 
Each of the following three blocks of 128 bits is divided in a first part of 117 bits and a second part of 11 bits. The first parts store the first, second, and third bits, respectively, of the characters' binary encodings. The three parts of 11 bits, concatenated together, store the rank of N's before the block. This layout minimizes the number of bitwise operations (in particular, shifts and masks) needed to compute a parallel rank.

Several heuristics have been implemented to reduce the number of cache misses in practice. In particular, we note that in Algorithm \ref{alg:fill leaves} we can avoid backtracking when the range size becomes equal to one; the same optimization can be implemented in Algorithm \ref{alg:merge} when also computing the LCP array, since leaves of size one can be identified during navigation of internal suffix tree nodes. Overall, we observed (using a memory profiler) that in practice the combination of Algorithms \ref{alg:fill nodes}-\ref{alg:fill leaves} generates at most $1.5n$ cache misses, $n$ being the total collection's size. The extension of Algorithm \ref{alg:merge} that computes also LCP values generates twice this number of cache misses (this is expected, since the algorithm navigates two BWTs).

We now report some preliminary experiments on our algorithms: \induceLCP (Algorithms \ref{alg:fill nodes}-\ref{alg:fill leaves}) and \mergeBwtLCP (Algorithm \ref{alg:merge}, extended to compute also the LCP array). 
All tests were done on a DELL PowerEdge R630 machine, used in non exclusive mode.
Our platform is a $24$-core machine with Intel(R) Xeon(R) CPU E5-2620 v3 at $2.40$ GHz, with $128$ GiB of shared memory and 1TB of SSD. The system is Ubuntu 14.04.2 LTS. The code was compiled using gcc 8.1.0 with flags \texttt{-Ofast} \texttt{-fstrict-aliasing}.
%To assess the performance of our algorithms, we consider two experiments.

\begin{table}[t]
	\centering
	{\scriptsize
		\begin{tabular}{|@{\ }l@{\ }|@{\ }c@{\ }|@{\ }c@{\ }|@{\ }c@{\ }|@{\ }c@{\ }|c@{\ }|c@{\ }|}
			\hline
			Name        & Size & $\sigma$          & N. of  & Max read      & Bytes for\\
			& GiB   &              & reads  & length   & lcp values\\
			\hline
			NA12891.8           & 8.16      & 5               & 85,899,345  & 100 & 1   \\
			\hline
			shortreads         & 8.0     & 6               & 85,899,345  & 100      & 1  \\
			\hline
			pacbio                & 8.0      & 6            & 942,248      & 71,561  & 4   \\
			\hline
			pacbio.1000       & 8.0      & 6              & 8,589,934    & 1000    & 2\\
			\hline
			NA12891.24        &   23.75    & 6               & 250,000,000  & 100 & 1   \\
			\hline
			NA12878.24           &  23.75    & 6               & 250,000,000  & 100 & 1   \\
			\hline
		\end{tabular}
	}
	\caption{Datasets used in our experiments. Size accounts only for the alphabet's characters. The alphabet's size $\sigma$ includes the terminator.}
	\label{tableDataset}
\end{table}

\begin{table}[t]
	\centering
	{\scriptsize
		\begin{tabular}{|c|c|c|c|c|c|c|c|c|}
			\hline
			& \multicolumn{2}{c|}{Preprocessing}     & \multicolumn{2}{c|}{\eGap}    & \multicolumn{2}{c|}{\mergeBwtLCP}    \\ 
			\hline
			Name            &  Wall Clock &   RAM   &   Wall Clock    & RAM                & Wall Clock   & RAM      \\ 
			&  (h:mm:ss)  &    (GiB) &   (h:mm:ss)    &  (GiB)              &  (h:mm:ss)   &  (GiB)     \\
			\hline
			NA12891.8    & 1:15:57 & 2.84 & \multirow{2}{*}{10:15:07}  &   \multirow{2}{*}{18.09 (-m 32000)}    &   \multirow{2}{*}{3:16:40}    &  \multirow{2}{*}{26.52}  \\
			\cline{1-3}
			NA12891.8.RC & 1:17:55 & 2.84 &                            &                                        &                               &             \\
			\hline
			shortreads     &  1:14:51 & 2.84 & \multirow{2}{*}{11:03:10}  &    \multirow{2}{*}{16.24  (-m 29000)}  &   \multirow{2}{*}{3:36:21}    &  \multirow{2}{*}{26.75}   \\ 
			\cline{1-3}
			shortreads.RC  & 1:19:30 & 2.84 &                            &                                        &                               &             \\
			\hline
			pacbio.1000    &  2:08:56 & 31.28 & \multirow{2}{*}{5:03:01}   &   \multirow{2}{*}{21.23 (-m 45000)}    &  \multirow{2}{*}{4:03:07}     &  \multirow{2}{*}{42.75}   \\ 
			\cline{1-3}
			pacbio.1000.RC &  2:15:08 & 31.28 &                            &                                        &                               &             \\
			\hline
			pacbio         &  2:27:08 & 31.25 & \multirow{2}{*}{2:56:31}   &   \multirow{2}{*}{33.40 (-m 80000)}    &   \multirow{2}{*}{4:38:27}    &  \multirow{2}{*}{74.76}        \\ 
			\cline{1-3}
			pacbio.RC      &  2:19:27 & 31.25 &                            &                                        &                               &             \\
			\hline
			NA12878.24   & 4:24:27  & 7.69  & \multirow{2}{*}{31:12:28}   &   \multirow{2}{*}{47.50 (-m 84000)}    &   \multirow{2}{*}{6:41:35}    &  \multirow{2}{*}{73.48}        \\ 
			\cline{1-3}
			NA12891.24      & 4:02:42  & 7.69  &                            &                                        &                               &             \\
			\hline
		\end{tabular}
		\caption{In this experiment, we merge pairs of BWTs and induce the LCP of their union using \eGap and \mergeBwtLCP. We also show the resources used by the pre-processing step (building the BWTs) for comparison. Wall clock is the elapsed time from start to completion of the instance, while RAM (in GiB) is the peak Resident Set Size (RSS). All values were taken using the \texttt{/usr/bin/time} command. During the preprocessing step on the collections pacBio.1000 and pacBio, the available memory in MB (parameter m) of \eGap was set to 32000 MB. In the merge step this parameter was set to about to the memory used by \mergeBwtLCP.
			\eGap and \mergeBwtLCP take as input the same BWT file.}
		\label{tab:merge}
	}
\end{table}	

Table \ref{tableDataset} summarizes the datasets used in our experiments. 
``NA12891.8''\footnote{{\scriptsize \url{ftp://ftp.1000genomes.ebi.ac.uk/vol1/ftp/phase3/data/NA12891/sequence_read/SRR622458_1.filt.fastq.gz}}} contains Human DNA reads on the alphabet $\Sigma_{DNA}$ 
%downloaded from, 
where we have removed reads containing the nucleotide $N$.
``shortreads'' contains Human DNA short reads on the extended alphabet $\Sigma_{DNA}^+$. 
%downloaded from~\footnote{\url{ftp://ftp.sra.ebi.ac.uk/vol1/ERA015/ERA015743/srf/}}.
%trimmed to length 100. 
%, downloaded from\footnote{\url{https://trace.ncbi.nlm.nih.gov/Traces/sra/?run=ERR1942989}}.
%trimmed to length 300. 
``pacbio'' contains PacBio RS II reads from the species \emph{Triticum aestivum} (wheat).
%, downloaded from~\footnote{\url{https://trace.ncbi.nlm.nih.gov/Traces/sra/?run=SRR5816161}}.
% with different lengths. 
``pacbio.1000'' are the strings from ``pacbio'' trimmed to length 1,000. 
All the above datasets except the first have been download from \url{https://github.com/felipelouza/egap/tree/master/dataset}.
To conclude, we added two collections, ``NA12891.24'' and ``NA12878.24'' obtained by taking the first $250,000,000$ reads from individuals NA12878\footnote{{\scriptsize \url{ftp://ftp.1000genomes.ebi.ac.uk/vol1/ftp/phase3/data/NA12878/sequence_read/SRR622457_1.filt.fastq.gz}}} and NA12891. All datasets except ``NA12891.8'' are on the alphabet $\Sigma_{DNA}^+$. In Tables \ref{tab:merge} and \ref{tab:induce}, the suffix ``.RC'' added to a dataset's name indicates the reverse-complemented dataset.

We  compare our algorithms with \eGap\footnote{{\scriptsize\url{https://github.com/felipelouza/egap}}} and BCR\footnote{{\scriptsize\url{https://github.com/giovannarosone/BCR_LCP_GSA}}}, two tools designed to build the BWT and LCP of a set of DNA reads. 
Since no tools for inducing the LCP from the BWT of a set of strings are available in the literature, in Table \ref{tab:induce} we simply compare the resources used by \induceLCP with the time and space requirements of  \eGap and BCR when building the BWT.
In \cite{EgidiAMB2019}, experimental results show that BCR works better on short reads and collections with a large average LCP, while \eGap works better when the datasets contain long reads and relatively small average LCP. 
For this reason, in the preprocessing step we have used BCR for the collections containing short reads and \eGap for the other collections.
\eGap, in addition, is capable of merging two or more BWTs while inducing the LCP of their union. In this case, we can therefore directly compare the performance of \eGap with our tool \mergeBwtLCP; results are reported in Table \ref{tab:merge}.
Since the available RAM is greater than the size of the input, we have used the semi-external strategy of \eGap. 
%To make the comparison more fair, the parameter \texttt{-m} of \eGap was chosen to match the RAM used by our tools. 
Notice that an entirely like-for-like comparison between our tools and \eGap is not completely feasible, being \eGap a semi-external memory tool (our tools, instead, use internal memory only).
While in our tables we report RAM usage only, it is worth to notice that \eGap uses a considerable amount of disk working space. For example, the tool uses $56$GiB of disk working space when run on a $8$GiB input (in general, the disk usage is of $7n$ bytes).

\begin{table}[t]
	{\scriptsize
		\centering
		\begin{tabular}{|c|c|c|c|c|c|c|}
			\hline
			& \multicolumn{2}{c|}{Preprocessing}  & \multicolumn{2}{c|}{\induceLCP}          \\ 
			\hline
			Name         & Wall Clock    & RAM       & Wall Clock  & RAM      \\ 
			&  (h:mm:ss)    & GiB        &  (h:mm:ss)  &  (GiB)    \\
			\hline
			NA12891.8 $\cup$ NA12891.8.RC (BCR)  &  2:43:02      & 5.67   &  1:40:01 &    24.48     \\
			\hline
			shortread $\cup$ shortread.RC (BCR) & 2:47:07    &   5.67   &  2:14:41 &   24.75     \\ 
			\hline
			pacbio.1000 $\cup$ pacbio.1000.RC (\eGap -m 32000)  &     7:07:46   &  31.28  &     1:54:56    &   40.75         \\ 
			\hline
			pacbio $\cup$ pacbio.RC (\eGap -m 80000)      &  6:02:37     &    78.125       &   2:14:37   &   72.76    \\ 
			\hline
			NA12878.24 $\cup$ NA12891.24 (BCR)      & 8:26:34      &  16.63        &   6:41:35   &   73.48    \\ 
			\hline
		\end{tabular}
		\caption{In this experiment, we induced the LCP array from the BWT of a collection (each collection is the union of two collections from Table \ref{tab:merge}). We also show pre-processing requirements (i.e. building the BWT) of the better performing tool between BCR and \eGap.}
		\label{tab:induce}
	}
\end{table}

Our tools exhibit a dataset-independent linear time complexity, whereas \eGap's running time depends on the average LCP.
Table \ref{tab:induce} shows that our tool \induceLCP induces the LCP from the BWT faster than building the BWT itself. When 'N's are not present in the dataset, \induceLCP processes data at a rate of $2.92$ megabases per second and uses $0.5$ Bytes per base in RAM in addition to the LCP. When 'N's are present, the throughput decreases to $2.12$ megabases per second and the tool uses  $0.55$ Bytes per base in addition to the LCP.
As shown in Table \ref{tab:merge}, our tool \mergeBwtLCP is from $1.25$ to $4.5$ times faster than \eGap on inputs with large average LCP, but $1.6$ times slower when the average LCP is small (dataset ``pacbio''). When 'N's are not present in the dataset, \mergeBwtLCP processes data at a rate of $1.48$ megabases per second and uses $0.625$ Bytes per base in addition to the LCP. When 'N's are present, the throughput ranges from $1.03$ to $1.32$ megabases per second and the tool uses  $0.673$  Bytes per base in addition to the LCP. When only computing the merged BWT (results not shown here for space reasons), \mergeBwtLCP uses in total $0.625$/$0.673$ Bytes per base in RAM (without/with 'N's) and is about $1.2$ times faster than the version computing also the LCP.

\bibliographystyle{plain}
\bibliography{paper}

%\bibliographystyle{plain}
%\bibliography{paper}

\end{document}